\newtheorem{theorem}{Theorem}
\newtheorem{lemma}{Lemma}
\newtheorem{remark}{Remark}
\newenvironment{proof}{\noindent{\bf Proof:}}{\hfill$\Box$}
\def\cvN{c_v^{(N)}}
\def\ceN{c_e^{(N)}}
\def\ReN{R_e^{(N)}}
\def\oeN{o_e^{(N)}}
\def\ol{\widehat}
\def\max{\mathrm{max}}
\def\cE{\mathcal{E}}
\def\cF{\mathcal{F}}
\def\cH{\mathcal{H}}
\title{Phase transition in count approximation by Count-Min sketch with conservative updates}
\author[1]{\'Eric Fusy}
\author[1]{Gregory Kucherov}
\affil[1]{LIGM, CNRS, Univ. Gustave Eiffel, Marne-la-Vall\'ee, France \texttt{\{Eric.Fusy|Gregory.Kucherov\}@univ-eiffel.fr}}
\date{}
\begin{document}
	\maketitle
	
	\begin{abstract}
		Count-Min sketch is a hash-based data structure to represent a dynamically changing associative array of counters. Here we analyse the counting version of Count-Min under a stronger update rule known as \textit{conservative update}, assuming the uniform distribution of input keys. We show that the accuracy of conservative update strategy undergoes a phase transition, depending on the number of distinct keys in the input as a fraction of the size of the Count-Min array. We prove that below the threshold, the relative error is asymptotically $o(1)$ (as opposed to the regular Count-Min strategy), whereas above the threshold, the relative error is $\Theta(1)$. The threshold corresponds to the peelability threshold of random $k$-uniform hypergraphs. We demonstrate that even for small number of keys, peelability of the underlying hypergraph is a crucial property to ensure the $o(1)$ error. Finally, we provide an experimental evidence that the phase transition does not extend to non-uniform distributions, in particular to the popular Zipf's distribution. 
	\end{abstract}

	\section{Introduction}
	\textit{Count-Min sketch} is a hash-based data structure to represent a dynamically changing associative array $\boldsymbol{a}$ of counters in an approximate way. The array $\boldsymbol{a}$ can be seen as a mapping from some set $K$ of keys to $\mathbb{N}$, where $K$ is drawn from a (large) universe $U$.  The goal is to support {\em point queries} about the (approximate) current value of $\boldsymbol{a}(p)$ for a key $p$. Count-Min is especially suitable for the streaming framework, when counters associated to keys are updated dynamically. That is, {\em updates} are (key,value) pairs $(p,\ell)$ with the meaning that $\boldsymbol{a}(p)$ is updated to $\boldsymbol{a}(p)+\ell$. 
	
	Count-Min sketch was proposed in \cite{cormode_improved_2005}, see e.g. \cite{DBLP:reference/db/Cormode18} for a survey. A similar data structure was introduced earlier in \cite{DBLP:conf/sigmod/CohenM03} named \textit{Spectral Bloom filter}, itself closely related to \textit{Counting Bloom filters} \cite{FanEtAl00}. The difference between Count-Min sketch and Spectral Bloom filter is marginal: while a Count-Min sketch requires hash functions to have disjoint codomains (rows of Count-Min matrix), a Spectral Bloom filter has all hash functions mapping to the same array, as does the regular Bloom filter. In this paper, we will deal with the Spectral Bloom filter version but will keep the term Count-Min sketch as more common in the literature. 
	
	Count-Min sketch supports negative 	update values $\ell$ provided that at each moment, each counter $\boldsymbol{a}(p)$ remains non-negative (so-called {\em strict turnstile model} \cite{liu2011methods}). 
	When updates are positive, the Count-Min update algorithm can be modified to a stronger version leading to smaller errors in queries. This modification, introduced in \cite{DBLP:conf/sigcomm/EstanV02} as \textit{conservative update}, is mentioned \cite{DBLP:reference/db/Cormode18}, without any formal analysis given in those papers. This variant is also discussed in \cite{DBLP:conf/sigmod/CohenM03} under the name \textit{minimal increase}, where it is claimed that it decreases the probability of a positive error by a factor of the number of hash functions, but no proof is given. We discuss this claim in the concluding part of this paper. 
	
	\newcommand{\lone}{\lVert \boldsymbol{a} \rVert_{1}}
	The case of positive updates is widespread in practice. In particular, a very common instance is \textit{counting} where all update values are $1$. 	This task occurs in different scenarios in network traffic monitoring, as well as other applications related to data stream mining \cite{DBLP:conf/sigcomm/EstanV02}. In bioinformatics, we may want to maintain, on the fly, multiplicities of $k$-mers (words of length $k$) occurring in a big dataset \cite{mohamadi2017ntcard,behera2018kmerestimate,shibuya_set-min_2020}. We refer to \cite{DBLP:conf/iccnc/EinzigerF15} for more examples of applications. 
	
	While it is easily seen that the error in conservative update can only be smaller than in Count-Min, obtaining more precise bounds is a challenging problem. Count-Min guarantees, with high probability, that the additive error can be bounded by $\varepsilon \lVert \boldsymbol{a} \rVert_{1}$ for any $\varepsilon$, where $\lVert \boldsymbol{a} \rVert_{1}$ is the $L1$-norm of $\boldsymbol{a}$ \cite{cormode_improved_2005}. In the counting setting, $\lVert \boldsymbol{a} \rVert_{1}$ is the length of the input stream which can be very large, and therefore this bound provides a weak guarantee in practice, unless the distribution of keys is very skewed and queries are made on frequent keys (\textit{heavy hitters}) \cite{liu2011methods,charikar2004finding,cormode2008finding}. 
	It is therefore an important practical question to analyse the improvement provided by the conservative update strategy compared to the original Count-Min sketch. 
	
	Probably the first attempt towards this goal was made in \cite{DBLP:conf/teletraffic/BianchiDLS12}, under assumption that all $\binom{n}{k}$ counter combinations are equally likely at each step ($n$ size of the Count-Min array, $k$ number of hash functions) which amounts to assuming uniform distribution on $\binom{n}{k}$ input keys. {
		For the regime when the number of distinct keys in the input considerably exceeds the sketch size, it was experimentally observed in \cite{DBLP:conf/teletraffic/BianchiDLS12} that frequent keys have essentially no error  while the (over)estimate for less frequent keys tends be the same, for given input stream length and sketch size. 
		On the other hand, 
		it was experimentally shown
		that in that regime, the uniform distribution of keys presents the worst-case scenario producing the largest 
		such estimate and a method was proposed to compute this estimate. 
		Another method for bounding the error proposed in \cite{DBLP:conf/iccnc/EinzigerF15} 
	}     
	is based on a simulation of spectral Bloom filters  by a hierarchy of ordinary Bloom filters. 
	However, the bounds provided are not explicit but are expressed via a recursive relation based on false positive rates of involved Bloom filters. 
Recent works \cite{DBLP:journals/corr/abs-2203-14549,benmazziane:hal-03613957} propose formulas for computing error bounds depending on key probabilities assumed independent but not necessarily uniform, in particular leading to an improved precision bounds for detecting heavy hitters. 
	
	In this paper, we provide a probabilistic analysis of the conservative update scheme for counting under the assumption of \textit{uniform distribution of keys} in the input. Our main result is a demonstration that the error in count estimates undergoes a phase transition 
	when the number of distinct keys grows relative to the size of the Count-Min array. We show that the phase transition threshold corresponds to the \textit{peelability threshold} for random $k$-uniform hypergraphs. 
	For the \textit{subcritical regime}, when the number of distinct keys is below the threshold, we show that the relative error for a randomly chosen key tends to $0$ asymptotically, with high probability. This contrasts with the regular Count-Min algorithm producing a relative error shown to be at least $1$ with constant probability. 
	
	For the \textit{supercritical regime}, we show that the average relative error is lower-bounded by a constant (depending on the number of distinct keys), with high probability. We prove this result for $k=2$ and conjecture that it holds for arbitrary $k$ as well. We provide computer simulations showing that the expected relative error grows fast after the threshold, with a distribution showing a peculiar multi-modal shape. In particular, keys with small (or zero) error still occur after the threshold, but their fraction quickly decreases when the number of distinct keys grows. 
	
	
	After defining Count-Min sketch and conservative update strategy in Section~\ref{countmin-and-CU} and introducing hash hypergraphs in Section~\ref{sec:hash}, we formulate the conservative update algorithm (or regular Count-Min, for that matter) in terms of a hypergraph augmented with counters associated to vertices. In Section~\ref{sec:main}, we state our main results and illustrate them with a series of computer simulations. 
	All technical proofs are provided in a separate Section~\ref{sec:proofs}. 
		
	In addition, in Section~\ref{sec:non-peelable}, we study a specific family of 2-regular $k$-hypergraphs that are sparse but not peelable.  For such graphs we show that while the relative error of every key is $1$ with the regular Count-Min strategy, it is $1/k+o(1)$ for conservative update. While this result is mainly of theoretical interest, it illustrates that the peelability property is crucial for the error to be asymptotically vanishing. 
	Finally, in Section~\ref{sec:zipf}, we turn to non-uniform distributions and provide experimental evidence that for Zipf's distribution, the phase transition in average error does not occur. 
	
\section{Count-Min and Conservative Update}
\label{countmin-and-CU}
We consider a (counting version of) Count-Min sketch to be an array $A$ of size $n$ of counters initially set to $0$, together with hash functions $h_1,\ldots,h_k$ mapping keys from a given universe to $[1..n]$. 
To count key occurrences in a stream of keys, regular Count-Min proceeds as follows. 
To process a key $p$, each of the counters $A[h_i(p)]$, $1\leq i\leq k$, is incremented by $1$. 
Querying the occurrence number $\boldsymbol{a}(p)$ of a key $p$ returns the estimate $\hat{\boldsymbol{a}}_{CM}(p)=\min_{1\leq i\leq k}\{A[h_i(p)]\}$. It is easily seen that $\hat{\boldsymbol{a}}_{CM}(p)\geq \boldsymbol{a}(p)$. 
A bound on the overestimate of $\boldsymbol{a}(p)$ is given by the following result adapted from \cite{cormode_improved_2005}.
	\begin{theorem}[\cite{cormode_improved_2005}]
		\label{countmin}
		For $\varepsilon>0$, $\delta>0$, consider a Count-Min sketch with $k=\lceil \ln(\frac{1}{\delta}) \rceil$  and size $n=k\frac{e}{\varepsilon}$. Then $\hat{\boldsymbol{a}}_{CM}(p)- \boldsymbol{a}(p)\leq \varepsilon N$ with probability at least $1-\delta$, where $N$ is the size of the input stream. 
	\end{theorem}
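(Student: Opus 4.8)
The plan is to reproduce the classical argument of \cite{cormode_improved_2005}, adapted to the present formulation. View the array $A$ of size $n$ as $k$ consecutive blocks (``rows'') of width $w=n/k=e/\varepsilon$, each equipped with its own hash function $h_i$ mapping keys into one block; since the $h_i$ have disjoint codomains this is exactly the setting of Theorem~\ref{countmin}. The idea is to bound the overestimate coming from a single row by Markov's inequality and then exploit independence across rows.

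First I would fix a key $p$ and a row $i$. After the whole stream has been processed, $A[h_i(p)]$ equals $\boldsymbol{a}(p)$ plus the total contribution of the other keys whose $i$-th hash collides with that of $p$; denote this excess by $X_i=\sum_{q\ne p}\boldsymbol{a}(q)\,\mathbf{1}[h_i(q)=h_i(p)]\ge 0$. Because $h_i$ distributes keys uniformly over the $w$ cells of its block, $\mathbb{E}[X_i]=\frac1w\sum_{q\ne p}\boldsymbol{a}(q)\le \frac{\lone}{w}=\frac{\varepsilon}{e}\lone$, and since all updates are positive we have $\lone=N$, hence $\mathbb{E}[X_i]\le \varepsilon N/e$. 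Markov's inequality applied to the nonnegative variable $X_i$ then gives $\Pr[X_i>\varepsilon N]<\mathbb{E}[X_i]/(\varepsilon N)\le 1/e$.

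Finally, since $\hat{\boldsymbol{a}}_{CM}(p)=\min_{1\le i\le k}A[h_i(p)]=\boldsymbol{a}(p)+\min_{1\le i\le k}X_i$ and the $k$ hash functions are drawn independently, $\Pr[\hat{\boldsymbol{a}}_{CM}(p)-\boldsymbol{a}(p)>\varepsilon N]=\prod_{i=1}^k\Pr[X_i>\varepsilon N]<e^{-k}\le\delta$, the last inequality because $k=\lceil\ln(1/\delta)\rceil\ge\ln(1/\delta)$; taking complements proves the theorem. There is no genuine obstacle here: the only subtleties worth flagging are that each $h_i$ need only be $2$-universal for the collision probability $1/w$ to hold (full independence is used just between the $k$ rows), that the identity $\lone=N$ relies on the counting/positive-update assumption, and that the ceilings in $w=\lceil e/\varepsilon\rceil$ and $k=\lceil\ln(1/\delta)\rceil$ only strengthen the inequalities.
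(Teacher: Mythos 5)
Your proof is correct and is essentially the classical argument from the cited reference \cite{cormode_improved_2005} (the paper itself imports this theorem without proof): per-row expectation bound via pairwise independence, Markov's inequality to get failure probability $1/e$ per row, and independence across the $k$ rows to drive it down to $e^{-k}\leq\delta$. The only caveat worth noting is that your row-independence step relies on the disjoint-codomain (matrix) version of Count-Min, which is indeed the setting of the cited theorem with $n=k\frac{e}{\varepsilon}$, even though elsewhere the paper adopts the Spectral-Bloom-filter convention of a single shared array, where the events $\{X_i>\varepsilon N\}$ would no longer be independent.
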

While Theorem~\ref{countmin} is useful in some situations, 
it has a limited utility as it bounds the error with respect to the stream size which can be very large. 

\textit{Conservative update} strengthens Count-Min by increasing only the smallest counters among $A[h_i(p)]$. Formally, for $1\leq i\leq k$, 
$A[h_i(p)]$ is incremented by $1$ only if $A[h_i(p)]=\min_{1\leq j\leq k}\{A[h_j(p)]\}$ and is left unchanged otherwise. 
The estimate of $\boldsymbol{a}(p)$, denoted $\hat{\boldsymbol{a}}_{CU}(p)$, is computed as before: $\hat{\boldsymbol{a}}_{CU}(p)=\min_{1\leq i\leq k}\{A[h_i(p)]\}$. 
It can be seen that $\hat{\boldsymbol{a}}_{CU}(p)\geq \boldsymbol{a}(p)$ still holds, and that $\hat{\boldsymbol{a}}_{CU}(p)\leq \hat{\boldsymbol{a}}_{CM}(p)$. 
The latter follows from the observation that on the same input, an entry of counter array $A$ under conservative update can never get larger than the same entry under Count-Min. 


\section{Hash hypergraphs and CU process}\label{sec:hash}

With a counter array $A[1..n]$ and hash functions $h_1,...,h_k$ we associate a $k$-uniform \textit{hash hypergraph}  $H=(V,E)$ with vertex-set $V=\{1..n\}$
 and edge-set $E=\{\{h_1(p),...h_k(p)\}\}$ for all distinct keys $p$. Let $\cH_{n,m}^{k}$ be the set of $k$-uniform hypergraphs with $n$ vertices and $m$ edges. 
%
We assume that the hash hypergraph is a uniformly random Erd\H{o}s-R\'enyi hypergraph in $\cH_{n,m}^{k}$, which 
we denote by $H^k_{n,m}$, where $m$ is the number of distinct keys in the input (for $k=2$, we use the notation $G_{n,m}=H^2_{n,m}$). 
Even if this property is not granted by hash functions used in practice, it is a reasonable and commonly used hypothesis to conduct the analysis of sketch algorithms.   

Below we show that the behavior of a sketching scheme depends on the properties of the associated hash hypergraph. It is well-known that depending on the $m/n$ ratio, many properties of Erd\H{o}s-R\'enyi (hyper)graphs follow a phase transition phenomenon \cite{FriezeKaronski16}. For example, the emergence of a giant component, of size $O(n)$, occurs {with high probability} (hereafter, \textit{w.h.p.}) at the threshold $\frac{m}{n}=\frac{1}{k(k-1)}$ \cite{KARONSKI2002125}. 


Particularly relevant to us is the \textit{peelability} property. 
Let $H=(V,E)$ be a hypergraph. The peeling process on $H$ is as follows. We define $H_0=H$, and iteratively for $i\geq 0$, we define $V_i$  to be the set of leaves (vertices of degree 1) or isolated vertices in $H_i$, $E_i$ to be the set of edges of $H_i$ incident to vertices in $V_i$, and $H_{i+1}$ to be the hypergraph obtained from $H_i$ by deleting  
the vertices of $V_i$ and the edges of $E_i$. A vertex in $V_i$ is said to have \emph{peeling level}~$i$. 
The process stabilizes from some step $I$, and the hypergraph $H_I$ is called the 
\emph{core} of $H$, which is the largest induced sub-hypergraph whose vertices all have degree at least $2$. If $H_I$ is empty, then $H$ is called \emph{peelable}.

It is known~\cite{molloy2005cores} that peelability undergoes a phase transition. For $k\geq 3$, there exists a positive constant $\lambda_k$ such that, for $\lambda<\lambda_k$,
the random hypergraph $H^k_{n,\lambda n}$ is w.h.p. peelable as $n\to\infty$, while for $\lambda>\lambda_k$, the core of $H^k_{n,\lambda n}$ has w.h.p. a size  concentrated around 
$\alpha n$ for some $\alpha>0$ that depends on $\lambda$.  The first peelability thresholds are $\lambda_3\approx 0.818$, $\lambda_4\approx 0.772$, etc., $\lambda_3$ being the largest. 

For $k=2$, for $\lambda<1/2$, w.h.p. a proportion $1-o(1)$ of vertices are in trees of size $O(1)$, 
(and a proportion $o(1)$ of the vertices are in the core), while for $\lambda\geq 1/2$, the core size is w.h.p. concentrated around $\alpha n$ for $\alpha>0$ that depends on~$\lambda$
~\cite{pittel2005counting}. 

We note that properties of hash hypergraphs determine the behavior of some other hash-based data structures, such as Cuckoo hash tables \cite{pagh2004cuckoo} and Cuckoo filters \cite{10.1145/2674005.2674994}, Minimal Perfect Hash Functions and Static Functions \cite{majewski1996family}, Invertible Bloom filters \cite{goodrich2011invertible}, and others. We refer to \cite{DBLP:phd/dnb/Walzer20} for an extended study of relationships between properties of hash hypergraphs and some of those data structures. In particular, peelability is directly relevant to certain constructions of Minimal Perfect Hash Functions as well as to good functioning of Invertible Bloom filters.

%


The connection to hash hypergraphs allows us to reformulate the Count-Min algorithm with conservative updates as a process, which 
we call CU-process, on a random hypergraph $H_{n,m}^k$, where $n,m,k$ correspond to counter array length, number of distinct keys, and number of 
hash functions, respectively. 
Let $H=(V,E)$ be a hypergraph. To each vertex $v$ we associate a counter $c_v$ initially set to $0$.  At each step $t\geq 1$, a \emph{CU-process} on $H$ chooses  an edge $e=\{v_1,\ldots,v_k\}\in E$ in $H$, and increments by 1 those $c_{v_i}$ which verify $c_{v_i}=\min_{1\leq j\leq k}c_{v_j}$. 
For $t\geq 0$ and $v\in V$, $c_v(t)$ will denote the value of the counter $c_v$ after $t$ steps, and $o_e(t)$ the number of times edge $e\in E$ has been drawn in the first $t$ steps. The counter $c_e(t)$ of an edge $e=\{v_1,\ldots,v_k\}$ is defined as $c_e(t)=\min_{1\leq i\leq k} c_{v_i}(t)$. Clearly, for each $t$ and each $e$, $o_e(t)\leq c_e(t)$. The \emph{relative error} of $e$ at time $t$ is defined as $R_e(t)=\frac{c_e(t)-o_e(t)}{o_e(t)}$. 
The following Lemma can be easily proved by induction on $t$.
\begin{lemma}\label{lem:observation}
	Let $H=(V,E)$ be a hypergraph on which a CU-process is run. 
	At every step $t$, for each vertex $v$, there is at least one edge $e$ incident to $v$ such that $c_e(t)=c_v(t)$.  
\end{lemma}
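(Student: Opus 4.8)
The plan is to prove this by induction on $t$, exactly along the lines the paper hints at. The one structural fact that does all the work is that the counters are non-decreasing in $t$ (immediate from the update rule: a counter is either left alone or incremented) and integer-valued, combined with the observation that when an edge $f$ is drawn at a step, \emph{every} vertex of $f$ that currently realizes the edge-minimum $c_f$ is incremented simultaneously, so $c_f$ itself increases by exactly $1$. Throughout, recall that $c_e(t)=\min_{u\in e}c_u(t)\le c_v(t)$ automatically whenever $v\in e$, so the content of the statement is that the minimum over some incident edge is \emph{attained} at $v$. Isolated vertices carry no incident edge and never participate in the CU-process, so we only argue for vertices of degree at least $1$ (or adopt the convention that the statement is vacuous for isolated vertices).

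For the base case $t=0$ all counters equal $0$, so any vertex $v$ with an incident edge $e$ satisfies $c_e(0)=0=c_v(0)$. For the inductive step, assume the claim at time $t$, let $f=\{v_1,\dots,v_k\}$ be the edge drawn at step $t+1$, and fix a vertex $v$. If $v$ is \emph{not} incremented at this step, take by the induction hypothesis an incident edge $e$ with $c_e(t)=c_v(t)$; since all counters only increase, $c_v(t+1)=c_v(t)=c_e(t)\le c_e(t+1)\le c_v(t+1)$, where the last inequality uses $v\in e$, so $c_e(t+1)=c_v(t+1)$. If $v$ \emph{is} incremented, then $v\in f$ and $c_v(t)=c_f(t)=\min_j c_{v_j}(t)$; I claim $e=f$ works. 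Indeed, every $u\in f$ with $c_u(t)=c_f(t)$ is incremented, hence $c_u(t+1)=c_f(t)+1$, while every $u\in f$ with $c_u(t)>c_f(t)$ has $c_u(t)\ge c_f(t)+1$ by integrality and thus $c_u(t+1)\ge c_f(t)+1$; taking the minimum over $u\in f$ gives $c_f(t+1)=c_f(t)+1=c_v(t)+1=c_v(t+1)$. This completes the induction.

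I do not expect a real obstacle here; the argument is elementary. The only points needing a line of care are (i) stating explicitly that counters are monotone non-decreasing in $t$, (ii) using integrality to pass from $c_u(t)>c_f(t)$ to $c_u(t)\ge c_f(t)+1$ (without this, one could not conclude $c_f$ rises by exactly $1$), and (iii) the bookkeeping convention for isolated vertices. One could also avoid the explicit case split by noting uniformly that $c_v(t+1)\in\{c_v(t),c_v(t)+1\}$ and that an edge witnessing equality at time $t$ keeps witnessing it at time $t+1$ unless $v$'s increment outpaces that edge's minimum, which the argument shows can only occur when $v\in f$, in which case $f$ itself is the new witness.
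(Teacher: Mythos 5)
Your induction on $t$ is correct and is exactly the argument the paper has in mind (it states only that the lemma ``can be easily proved by induction on $t$'' and omits the details). Both cases of your inductive step are sound, and the points you flag (monotonicity, integrality, isolated vertices) are precisely the minor bookkeeping items needed.
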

Observe that, when $H$ is a graph ($k=2$), Lemma~\ref{lem:observation} is equivalent to the property that vertex counters cannot have a strict local maximum, i.e., at every step $t$, each vertex $v$ has at least one neighbour~$u$
such that $c_u(t)\geq c_v(t)$. 

 \section{Phase transition of the relative error}
 \label{sec:main}

\subsection{Main results}\label{sec:main_results}

Let $H=(V,E)$ be a hypergraph, $|V|=n, |E|=m$. Let $N\geq 1$. We consider two closely related models of input to perform the CU-process. In the \emph{$N$-uniform model}, the CU process is performed on a random sequence of keys (edges in $E$)  
of length $N\cdot m$, each key being drawn independently and uniformly in $E$. 
In  the \emph{$N$-balanced model}, the CU-process is performed on a random sequence of length $N\cdot m$,  
such that each $e\in E$ occurs exactly $N$ times, and the order of keys is random. In other words, the sequence of keys on which the 
CU-process is performed is a random permutation of the multiset made of $N$ copies of each key of $E$. 
Clearly, both models are very close, since the number of occurrences of any key in the $N$-uniform model is concentrated around $N$ (with Gaussian fluctuations 
of order $\sqrt{N}$) as $N$ gets large.   
For both models, we use the notation $\cvN=c_v(Nm)$ for the resulting counter of $v\in V$, $\oeN=o_e(Nm)$ for the resulting number of occurrences of $e\in E$,  
$\ceN=c_e(Nm)$ for the resulting counter of $e\in E$, and $\ReN=R_e(Nm)=(\ceN-\oeN)/\oeN$  
for the resulting relative error of $e$.  
In the $N$-balanced model, since each key $e\in E$ occurs $N$ times, we have $\ReN=(\ceN-N)/N$.  

Our main result is the following. 
 
 \begin{theorem}[subcritical regime]\label{theo:subcriti}
Let $k\geq 2$, and let $\lambda<\lambda_k$, where $\lambda_2 =1/2$, and for $k\geq 3$, $\lambda_k$ is the peelability threshold as defined in Section~\ref{sec:hash}.   
Consider a CU-process on a random hypergraph $H^k_{n,\lambda n}$ under either $N$-uniform or $N$-balanced model, and consider the relative error $\ReN$ of a random edge in $H^k_{n,\lambda n}$. Then  $\ReN=o(1)$ w.h.p., as both $n$ and $N$ grow\footnote{Formally, for any $\epsilon >0$, there exists $M$ such that $\mathbb{P}(\ReN \leq \epsilon) \geq 1-\epsilon$ if $n\geq M$ and $N\geq M$.}.
 \end{theorem}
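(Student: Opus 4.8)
The plan is to reduce the theorem to two essentially independent pieces: a structural fact about sub-critical random hypergraphs, and a deterministic (given two high-probability events) analysis of the CU-process organised along the peeling order. Since in the $N$-uniform model $\oeN=N(1+o(1))$ w.h.p.\ for any fixed edge, and in the $N$-balanced model $\oeN=N$ exactly, it suffices to show that $\ceN\le\oeN+\psi(N)$ with $\psi(N)=o(N)$ for all but an $\epsilon$-fraction of edges, w.h.p.; I would in fact prove a stronger, \emph{process-level} bound $c_v(t)\le o_{e^\ast(v)}(t)+\psi(N)$ for all $t$, by induction on the peeling level of $v$, where $e^\ast(v)$ is defined below.

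\textbf{Structural ingredient.} First reveal the random hypergraph $H=H^k_{n,m}$, $m=\lambda n$. For $\lambda<\lambda_k$, the density-evolution (branching-process) analysis underlying the peelability threshold --- \cite{molloy2005cores} for $k\ge3$, \cite{pittel2005counting} for $k=2$ --- shows that the fraction of edges not removed within the first $L$ peeling rounds converges, as $n\to\infty$, to a constant $q_L$ with $q_L\to0$ as $L\to\infty$; hence, given $\epsilon$, fix $L=L(\epsilon)$ with $q_L<\epsilon$, so that w.h.p.\ at most $2\epsilon m$ edges have peeling level $>L$. For each eventually-removed vertex $v$ let $e^\ast(v)$ be its \emph{parent edge}: the unique edge incident to $v$ in $H_i$ at the round $i=\mathrm{level}(v)$ at which $v$ becomes a leaf. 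Two elementary facts about peeling will be used: (i) if $e$ has level $\ell$ then $e=e^\ast(v_0)$ for any $v_0\in e$ of minimal level (which equals $\ell$), and all vertices of $e$ have level $\ge\ell$; (ii) if an edge $f\ni v$ is removed strictly before $v$, then $f=e^\ast(w)$ for any $w\in f$ of minimal level, and $\mathrm{level}(w)<\mathrm{level}(v)$. Independently, running the CU-process, I would condition on the w.h.p.\ event $\mathcal{C}$ that the occurrence counts are uniformly concentrated, $|o_f(t)-t/m|\le\phi_0(N):=C\sqrt{N\log(nN)}$ for all edges $f$ and all $t\le Nm$; this is a union bound over $O(Nm^2)$ events using a Chernoff bound (in the $N$-uniform model) or a Hoeffding/Serfling bound for sampling without replacement (in the $N$-balanced model).

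\textbf{Dynamical ingredient (the heart).} Setting $\phi_\ell(N):=\phi_0(N)+\ell\,(2\phi_0(N)+1)$, I claim that on $\mathcal{C}$, for every eventually-removed vertex $v$ and all $t$,
\[ c_v(t)\ \le\ o_{e^\ast(v)}(t)+\phi_{\mathrm{level}(v)}(N). \]
The proof is by induction on $\ell=\mathrm{level}(v)$. For $\ell=0$, $v$ is a leaf of $e:=e^\ast(v)$, and by the observation already used in the paper $c_v$ is always the minimum entry of $e$, so $c_v(t)=o_e(t)$, which is even stronger. For $\ell\ge1$, consider the nonnegative quantity $\tilde g(t):=c_v(t)-o_{e^\ast(v)}(t)$ (nonnegative since $c_v\ge c_{e^\ast(v)}\ge o_{e^\ast(v)}$). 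It moves by at most $1$ per step, and it \emph{increases} only when $c_v$ is incremented by drawing some edge $f\ni v$ with $f\ne e^\ast(v)$; by (ii), $f=e^\ast(w)$ for some $w\in f$ with $\mathrm{level}(w)\le\ell-1$. At such a step, $c_v$ is the minimum entry of $f$, hence $c_v\le c_w$; by the induction hypothesis (and $e^\ast(w)=f$) $c_w\le o_f+\phi_{\ell-1}(N)$, and on $\mathcal{C}$ one has $|o_f-o_{e^\ast(v)}|\le2\phi_0(N)$, so at that moment
\[ \tilde g\ =\ c_v-o_{e^\ast(v)}\ \le\ 2\phi_0(N)+\phi_{\ell-1}(N)\ =\ \phi_\ell(N)-1 . \]
Thus $\tilde g$ can increase only while it is below $\phi_\ell(N)-1$, so it never exceeds $\phi_\ell(N)$. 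Note the degree of $v$ never enters, so no bound on local degrees is needed.

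\textbf{Conclusion and main obstacle.} For a uniformly random edge $e$, on the good events the level $\ell$ of $e$ is $\le L$ with probability $\ge1-2\epsilon$; writing $v_0$ for a minimal-level vertex of $e$, fact (i) gives $e^\ast(v_0)=e$, so the dynamical claim at $t=Nm$ yields $\ceN=\min_{v\in e}\cvN\le c_{v_0}(Nm)\le\oeN+\phi_L(N)$, whence $\ReN\le\phi_L(N)/\oeN=O\!\big(L\sqrt{N\log(nN)}/N\big)=o(1)$ as $N\to\infty$ with $L$ fixed. Adding up the failure probabilities --- atypical hypergraph, failure of $\mathcal{C}$, the random edge having level $>L$, and $\phi_L(N)/\oeN$ exceeding the target --- each can be forced below $\epsilon'$ once $n,N\ge M(\epsilon')$, which is exactly the quantified statement of the footnote; the $N$-uniform and $N$-balanced cases are handled in parallel since only $\mathcal{C}$ and $\oeN=N(1+o(1))$ are used. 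The genuinely delicate point is \emph{finding} the self-limiting quantity $\tilde g=c_v-o_{e^\ast(v)}$ together with the parent-edge structure that makes the induction close: without it one only obtains $\Theta(1)$ bounds, and indeed adversarial input orders really do force $\Theta(1)$ relative error even on a four-vertex path, so the order-randomness must be used --- and it enters exactly through $\mathcal{C}$. The remaining work is bookkeeping: making the three probabilistic inputs (the peeling-level tail $q_L\to0$, the uniform concentration of all $o_f(t)$, and the equivalence of the two input models) hold simultaneously and uniformly in $(n,N)$.
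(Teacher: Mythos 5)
Your dynamical core is correct, and it is essentially the paper's own argument in a slightly slicker packaging: the paper also inducts on the peeling level (Lemma~\ref{prop:cv} and its marked-hypergraph extension, Lemma~\ref{prop:cvH}), and its key step, Lemma~\ref{lem:dv}, plays exactly the role of your self-limiting quantity $\tilde g=c_v-o_{e^\ast(v)}$: both exploit that, while $c_v$ exceeds the counters of its lower-level neighbours, every increment of $c_v$ must come through the parent edge. Your parent-edge facts (i)--(ii), the level-tail bound $q_L\to 0$ (the paper's Lemma~\ref{lem:I}), the identification of a minimal-level vertex $v_0$ of the queried edge with $e^\ast(v_0)=e$, and the handling of the $N$-balanced model via sampling-without-replacement concentration are all sound.

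There is, however, one genuine gap. Your concentration event $\mathcal{C}$ is imposed simultaneously on \emph{all} $m=\lambda n$ edges, and the union bound over the edges is what forces the $\log(nN)$ inside $\phi_0(N)=C\sqrt{N\log(nN)}$. Your final estimate is therefore $\ReN=O\bigl(L\sqrt{\log(nN)/N}\bigr)$, which is $o(1)$ only in the regime $N\gg\log n$. The theorem's footnote asks for a single $M$ such that the bound holds for all $n\geq M$ \emph{and} all $N\geq M$; taking $N$ fixed equal to $M$ and $n\to\infty$, your bound diverges, so as written you prove a strictly weaker statement. The paper is explicit about this requirement (the $o(1)$ ``does not depend on the size of the giant component''), and its additional machinery --- the descendant sub-hypergraph $H_v$ of Lemma~\ref{lem:Hv}, its $O_\epsilon(1)$ size obtained from local weak convergence, and the monotonicity Lemma~\ref{lem:monoton} --- exists precisely to confine all concentration estimates to a bounded number of edges. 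The repair of your argument is cheap and, pleasantly, does not even require the monotonicity lemma: your induction for $v_0$ only ever references occurrence counts of edges incident to descendants of $v_0$ (the set $E_{v_0}$ in the paper's notation), which has at most $s(\epsilon)$ elements w.h.p.\ uniformly in $n$; requiring $\mathcal{C}$ only for those edges (Doob's maximal inequality plus a Bernstein bound, union over $O_\epsilon(1)$ edges) replaces $\phi_0(N)$ by $C_\epsilon\sqrt{N}$ and restores the required uniformity in $n$. With that modification the proposal is a complete and correct proof, taking the same route as the paper.
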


 Note that
with the regular Count-Min algorithm (see Section~\ref{countmin-and-CU}), in the $N$-balanced model, the counter value of a node $v$ is $\tilde{c}_v^{(N)} = N\cdot \mathrm{deg}(v)$, and the relative error $\tilde{R}_e^{(N)}$ of an edge $e=(v_1,\ldots,v_k)$ is always (whatever $N\geq 1$) equal to $\mathrm{min}(\mathrm{deg}(v_1),\ldots,\mathrm{deg}(v_k))-1$, and is thus always a non-negative integer.
 For fixed  $k\geq 2$ and $\lambda>0$, and for a random edge $e$ in $H^k_{n,\lambda n}$, the probability that all $k$ vertices belonging to $e$ have at least one 
 incident edge apart from $e$ converges to a positive constant $c(\lambda,k)=(1-e^{-k\lambda})^k$. Therefore, $\tilde{R_e}$ is a nonnegative integer whose probability to be non-zero  converges to  $c(\lambda,k)$. Thus, Theorem~\ref{theo:subcriti} ensures that, for $\lambda<\lambda_k$, conservative updates lead to a drastic decrease of the error, from $\Theta(1)$ to $o(1)$.

For a given hypergraph $H=(V,E)$ with $m$ edges, we define $\mathrm{err}_N(H)=\frac1{m}\sum_{e\in E}\ReN$ the \emph{average} error over the edges of $H$. 
Theorem~\ref{theo:subcriti} states that a randomly chosen edge has a small error, but this does not formally exclude that a small fraction of edges may have large errors, possibly yielding $\mathrm{err}_N(H)$ larger than $o(1)$. However, we believe that this is not the case. From the previous remark, it follows that the error of an edge $e=(v_1,\ldots,v_k)$ is upper-bounded by $\mathrm{min}(\mathrm{deg}(v_1),\ldots,\mathrm{deg}(v_k))-1$. Since the expected maximal degree in $H_{n,\lambda n}^k$ grows very slowly with $n$, one can expect that any set of $o(n)$ edges should have a contribution $o(1)$ w.h.p.. This is also supported by experiments given in the next section. 

Based on Theorem~\ref{theo:subcriti} and the above discussion, we propound that a phase transition occurs for the average error, in the sense that it is $o(1)$ in the subcritical regime $\lambda<\lambda_k$, 
and $\Theta(1)$ in the supercritical regime $\lambda>\lambda_k$, w.h.p.. 
Regarding the supercritical regime, we are able to show that this indeed holds for $k=2$ in the $N$-balanced model.

\begin{theorem}[supercritical regime, case $k=2$]\label{theo:supercrit}
Let $\lambda>1/2$. Then there exists a positive constant $f(\lambda)$ such that, in the $N$-balanced model, $\mathrm{err}_N(G_{n,\lambda n})\geq f(\lambda)$ w.h.p., as  $n$ grows\footnote{Formally, for any $\epsilon>0$, there exists $M$ such that   $\mathbb{P}(\mathrm{err}_N(G_{n,m})\geq f(\lambda))\geq 1-\epsilon$
 if $N\geq 1$ and $n\geq M$.}. 
\end{theorem}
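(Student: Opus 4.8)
The plan is to derive the bound from a purely combinatorial inequality about the core of $G=G_{n,\lambda n}$, the only probabilistic input being the classical structure of that core above the threshold. Let $C=(V_C,E_C)$ be the core of $G$, and abbreviate $c_v=\cvN$, $c_e=\ceN$ for the final counters. In the $N$-balanced model $c_e\geq\oeN=N$ for every edge, so $\ReN=(c_e-N)/N\geq0$; in particular $\mathrm{err}_N(G)=\frac1{Nm}\sum_{e\in E}(c_e-N)$ with $m=\lambda n$, and it suffices to lower bound $\sum_{e\in E}(c_e-N)$.

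The first step is a lower bound on the total counter mass carried by the core. Each of the $N|E_C|$ steps that processes an edge of $E_C$ increments the counter of at least one of its two endpoints, and both endpoints of such an edge lie in $V_C$; summing increments over all steps gives $\sum_{v\in V_C}c_v\geq N|E_C|$, hence $\sum_{v\in V_C}(c_v-N)\geq N\bigl(|E_C|-|V_C|\bigr)$. This is the one place where the hypothesis $k=2$ is essential: since $C$ has minimum degree $2$, $2|E_C|=\sum_{v\in V_C}\mathrm{deg}_C(v)\geq 2|V_C|$, so $|E_C|\geq|V_C|$ for free (and we will see the excess is in fact linear). For $k\geq3$ minimum degree $2$ only yields $k|E_C|\geq2|V_C|$, which does not force $|E_C|\geq|V_C|$, and this is precisely why the theorem is proved only for $k=2$.

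The second step transfers this vertex excess to edges via Lemma~\ref{lem:observation}. Applied at the final step, the lemma produces for every $v\in V_C$ an incident edge $e_v\in E$ with $c_{e_v}=c_v$, hence $c_{e_v}-N=c_v-N$. The assignment $v\mapsto e_v$ is at most two-to-one, since the preimage of an edge is contained in its (two) endpoints, and all numbers $c_e-N$ are nonnegative; therefore
\[
\sum_{e\in E}(c_e-N)\ \geq\ \tfrac12\sum_{v\in V_C}(c_{e_v}-N)\ =\ \tfrac12\sum_{v\in V_C}(c_v-N)\ \geq\ \tfrac12\,N\,\bigl(|E_C|-|V_C|\bigr),
\]
so $\mathrm{err}_N(G)\geq\bigl(|E_C|-|V_C|\bigr)/(2\lambda n)$. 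Note that this lower bound is deterministic given $G$: it holds for every $N\geq1$ and every ordering of the updates, so no concentration argument over the random permutation is needed, which matches the quantifier ``$N\geq1$'' in the statement; the only randomness that will be used is that of the hypergraph.

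It then only remains to show that w.h.p.\ $|E_C|-|V_C|\geq\beta n$ for some constant $\beta=\beta(\lambda)>0$ when $\lambda>1/2$; one may then take $f(\lambda)=\beta(\lambda)/(2\lambda)$. This is where the threshold $\lambda_2=1/2$ enters, and I expect it to be the only nontrivial ingredient. For $\lambda>1/2$ the core of $G_{n,\lambda n}$ w.h.p.\ spans a linear number of vertices, a positive density of which have degree $\geq3$ (equivalently, its cyclomatic number --- preserved by peeling, hence equal to that of $G$ --- is $\Theta(n)$ while it has $O(1)$ components). Since $|E_C|-|V_C|=\tfrac12\sum_{v\in V_C}(\mathrm{deg}_C(v)-2)$, this gives $|E_C|-|V_C|=\Theta(n)$. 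I would either quote this from the literature on $k$-cores of $G_{n,m}$ (e.g.\ \cite{pittel2005counting} and references therein) or re-derive it from the configuration-model description of the core; everything else in the argument is elementary and uses no randomness.
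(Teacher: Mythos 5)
Your proof is correct and follows essentially the same route as the paper's: the paper's Lemma~\ref{lem:excess} is exactly your two steps (lower-bounding $\sum_v \cvN$ by counting one increment per drawn edge, then transferring the vertex excess to edges via Lemma~\ref{lem:observation} with an at-most-two-to-one assignment), and the probabilistic input is the same linear-excess result from \cite{pittel2005counting}. The only cosmetic difference is that you apply the inequality to the $2$-core while the paper applies it to the giant component; since peeling preserves $|E|-|V|$ on each cyclic component these have the same excess up to lower-order terms, so nothing changes.
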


Our proof of Theorem~\ref{theo:supercrit} can be extended to any $k\geq 2$ and $\lambda>\frac1{k(k-1)}$ such that  
 the giant component $G'=(V',E')$  in $H_{n,\lambda n}^k$ satisfies w.h.p. $|E'|-|V'|\geq a\,n$ for some positive constant $a$. 
  For $k=2$, any value $\lambda>1/2$ has this property~\cite{pittel2005counting}, while for $k=3$ (respectively, $k=4$), the analysis given in \cite{DBLP:journals/cpc/BehrischCK14} ensures that this property holds for values of $\lambda$ strictly above the peelability threshold, namely $\lambda>\widetilde{\lambda_3}\approx 0.94$ (respectively,  $\lambda>\widetilde{\lambda_4}\approx 0.98$) . 
  %
   %
   %
  %
Nevertheless, based on simulations presented below, we expect that Theorem~\ref{theo:supercrit} holds for $k\geq 3$ for all $\lambda>\lambda_k$ as well, however proving this would then require a different kind of argument.

\subsection{Simulations}
\label{sec:simulation}

Here we provide several experimental results illustrating the phase transition stated in Theorems~\ref{theo:subcriti} and \ref{theo:supercrit}. Figure~\ref{fig:av_err} shows plots for the average relative error $\mathrm{err}_N(H_{n,m}^{k})$ as a function of $\lambda=m/n$, for $k\in\{2,3,4\}$ for regular Count-Min and the conservative update strategies. Experiments were run for $n=1000$ with the $N$-independent model (each edge drawn independently with probability $1/m$) and $N=50,000$ (number of steps $N\cdot |E|$). For each $\lambda$, an average is taken over $15$ random graphs. 

\medskip
\medskip

\begin{figure}[h]
	     \centering
	\begin{subfigure}[b]{0.35\textwidth}
		\centering
		\includegraphics[scale=0.22]{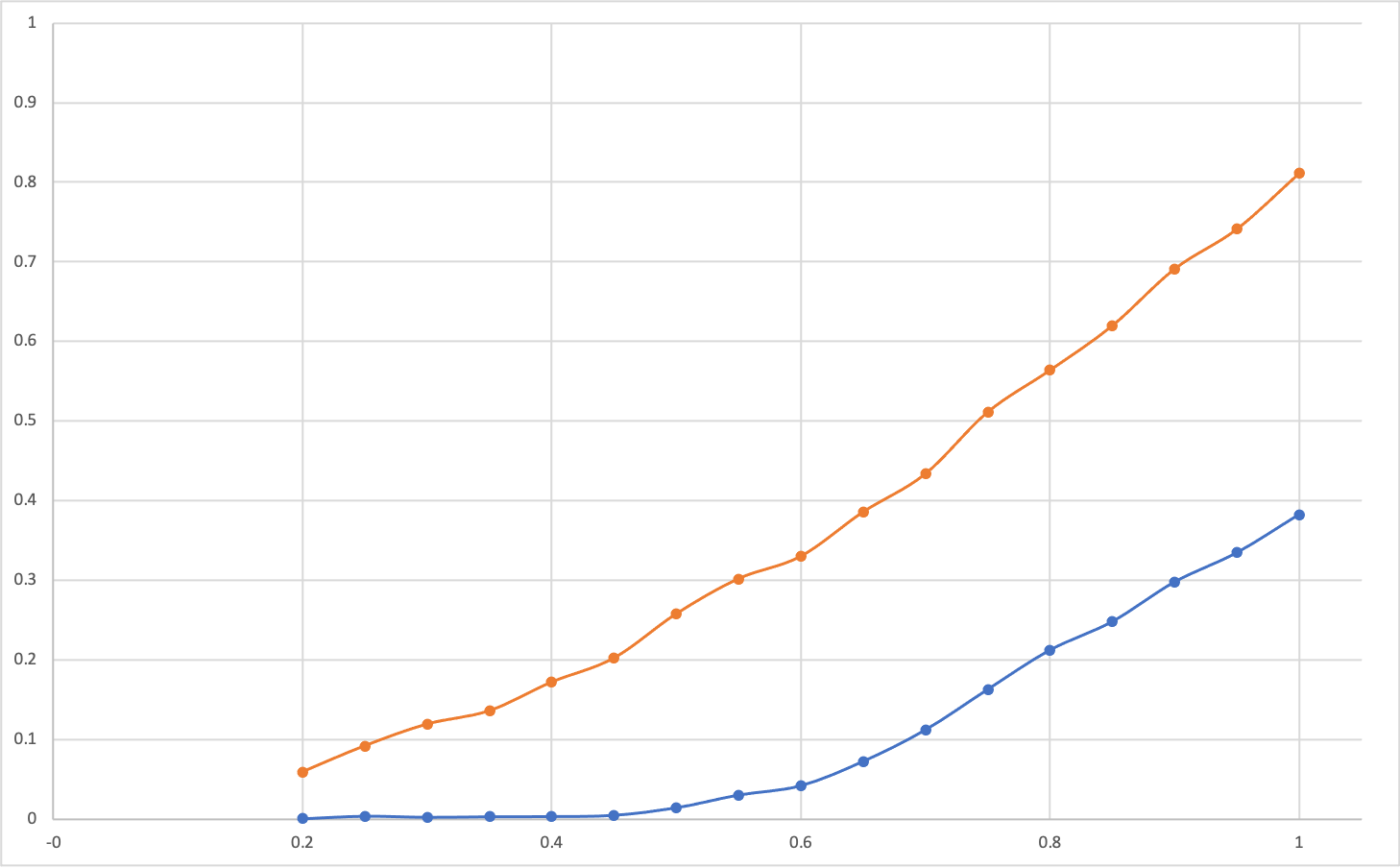}
		\caption{$k=2$}
		\label{fig:k2}
	\end{subfigure}
	\hfill
	\begin{subfigure}[b]{0.33\textwidth}
		\centering
		\includegraphics[scale=0.22]{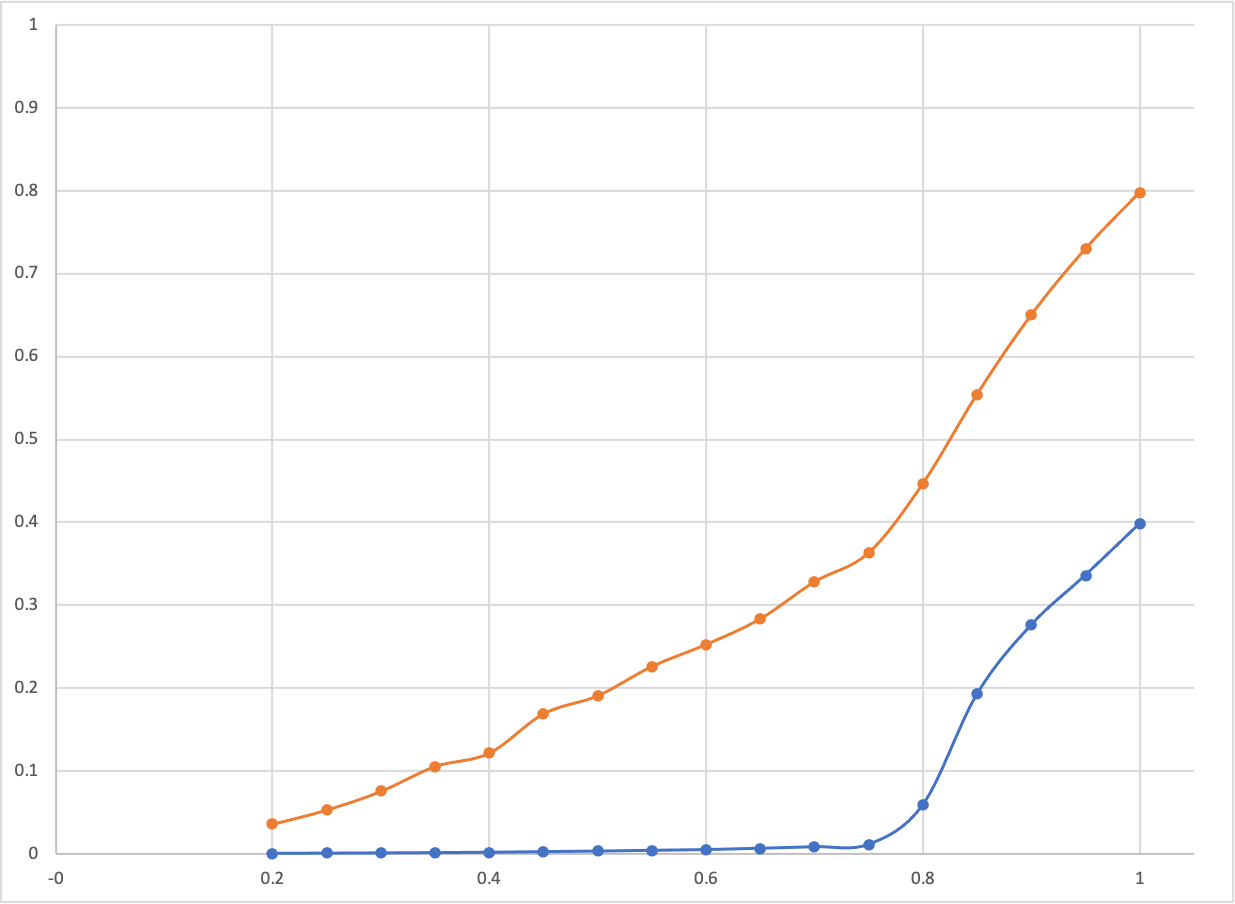}
		\caption{$k=3$}
		\label{fig:k3}
	\end{subfigure}
	\hfill
	\begin{subfigure}[b]{0.30\textwidth}
		\centering
		\includegraphics[scale=0.22]{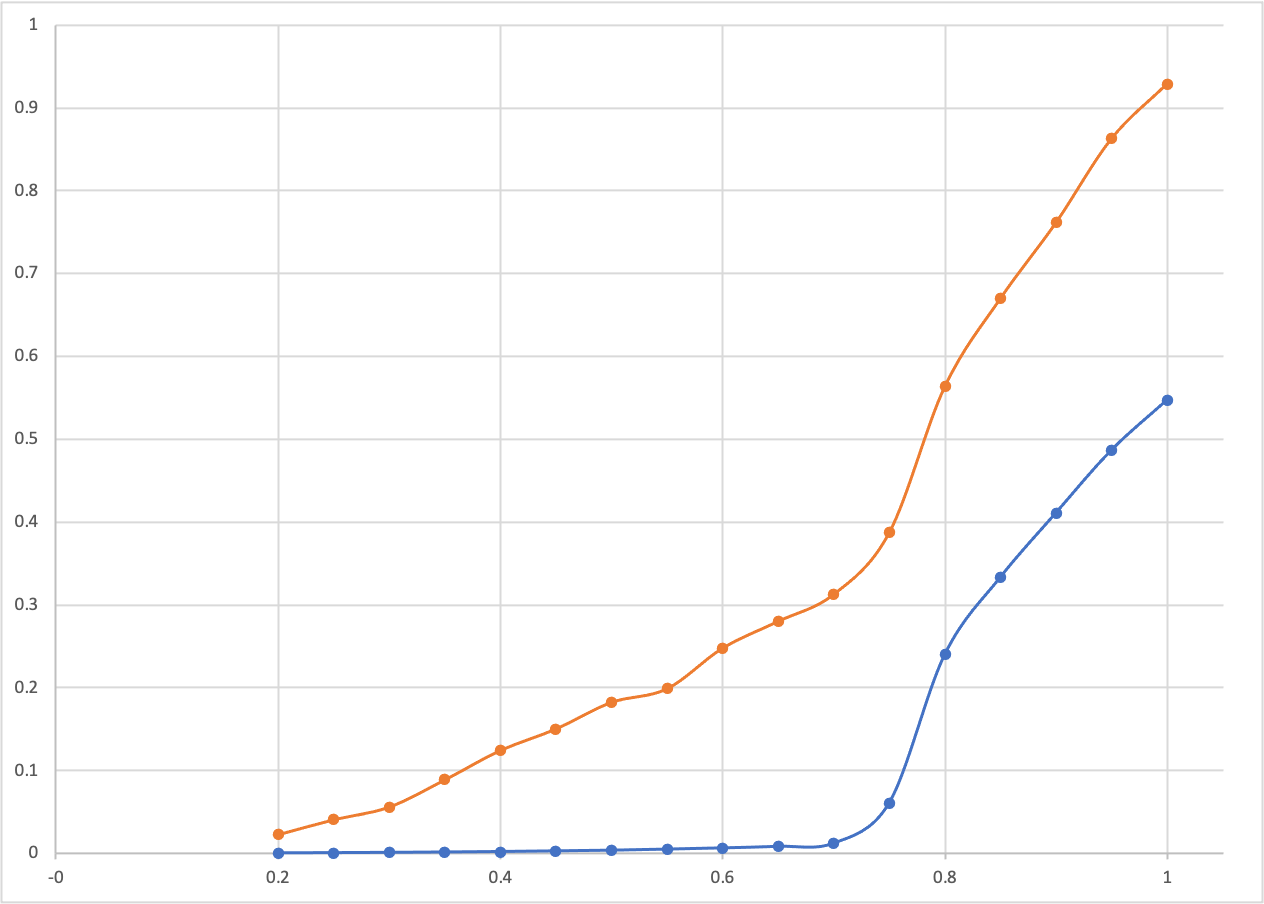}
		\caption{$k=4$}
		\label{fig:k4}
	\end{subfigure}
	\caption{Average relative error as a function of $\lambda=m/n$ for regular Count-Min (red) and conservative update (blue), for $k\in\{2,3,4\}$.}
	\label{fig:av_err}
\end{figure}

\medskip
\medskip

The phase transitions are clearly seen to correspond to the critical threshold $0.5$ for $k=2$, and, for $k\in\{3,4\}$, 
to the peelability thresholds $\lambda_3\approx 0.818$, $\lambda_4\approx 0.772$. Observe that the transition looks sharper for $k\geq 3$, which may be 
explained by the fact that the core size undergoes a discontinuous phase transition for $k\geq 3$, as shown in~\cite{molloy2005cores} (e.g. for $k=3$, the fraction of vertices in the  
core jumps from $0$ to about $0.13$). 

For the supercritical regime, we analysed the concentration around the average shown in Figure~\ref{fig:av_err} by simulating CU-processes on 50 random graphs, for each $\lambda$. Figure~\ref{fig:concentration} shows the results. We observe that the concentration tends to become tighter for growing~$\lambda$. 
\begin{figure}[h]
	\centering
	\begin{subfigure}[b]{0.47\textwidth}	
	\centering
	\includegraphics[width=\textwidth]{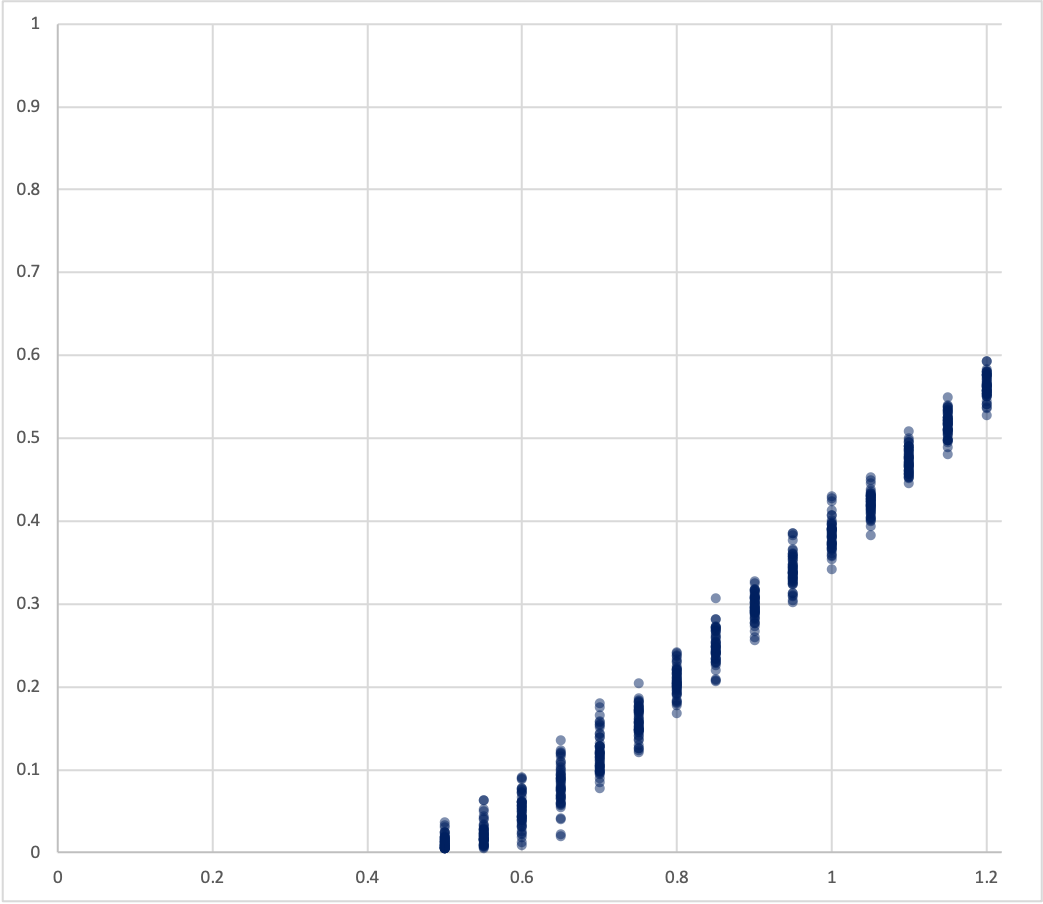}
	\caption{$k=2$}
	\label{fig:k2-concentration}
\end{subfigure}
	\hfill
\begin{subfigure}[b]{0.50\textwidth}
	\centering
	\includegraphics[width=\textwidth]{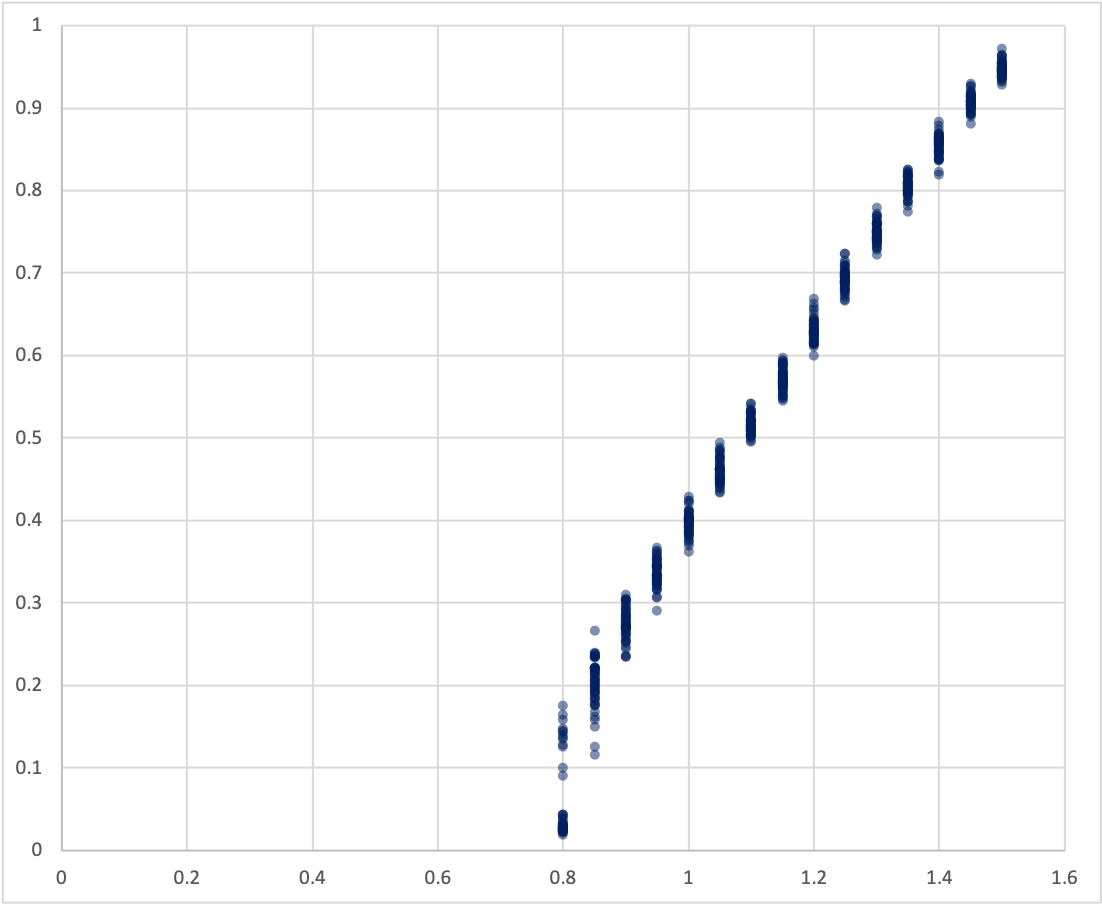}
	\caption{$k=3$}
	\label{fig:k3-concentration}
\end{subfigure}
	\caption{Concentration of the average error as a function of $\lambda=m/n$, for $k\in\{2,3\}$.}
\label{fig:concentration}
\end{figure}
Furthermore, we experimentally studied the empirical distribution of individual relative errors, which turns out to have an interesting multi-modal shape for intermediate values of~$\lambda$. Typical distributions for $k=2,3$ are illustrated in Figure~\ref{fig:distribution} where each point corresponds to an edge, and the edges are randomly ordered along the $x$-axis.  Each plot corresponds to an individual random graph. 

When $\lambda$ grows beyond the peelability threshold, a fraction of edges with small errors still remains but vanishes quickly: these include edges incident to at least one leaf (these have error $0$) and peelable edges (these have error $o(1)$, by arguments to be given in Section~\ref{sec:proofk3}). 
For intermediate values of $\lambda$, the distribution presents several modes: besides the main mode (largest concentration on plots of Figure~\ref{fig:distribution}), we observe a few other concentration values which are typically integers. While this phenomenon is still to be analysed, we explain it by the presence in the graphs of certain structural motifs that involve  disparities in node degrees. Note that the fraction of values concentrated around the main mode is dominant: for example, for $k=3,\lambda=3$ (Figure~\ref{fig:k3-l3}), about 90\% of values correspond to the main mode ($\approx 3.22$). Finally, when $\lambda$ becomes larger, these ``secondary modes'' disappear, and the distribution becomes concentrated around a single value. This is consistent with the tighter concentration observed earlier in Figure~\ref{fig:concentration}.


\begin{figure}[ht!]
	\centering
	\begin{subfigure}[b]{0.49\textwidth}
		\includegraphics[width=\textwidth]{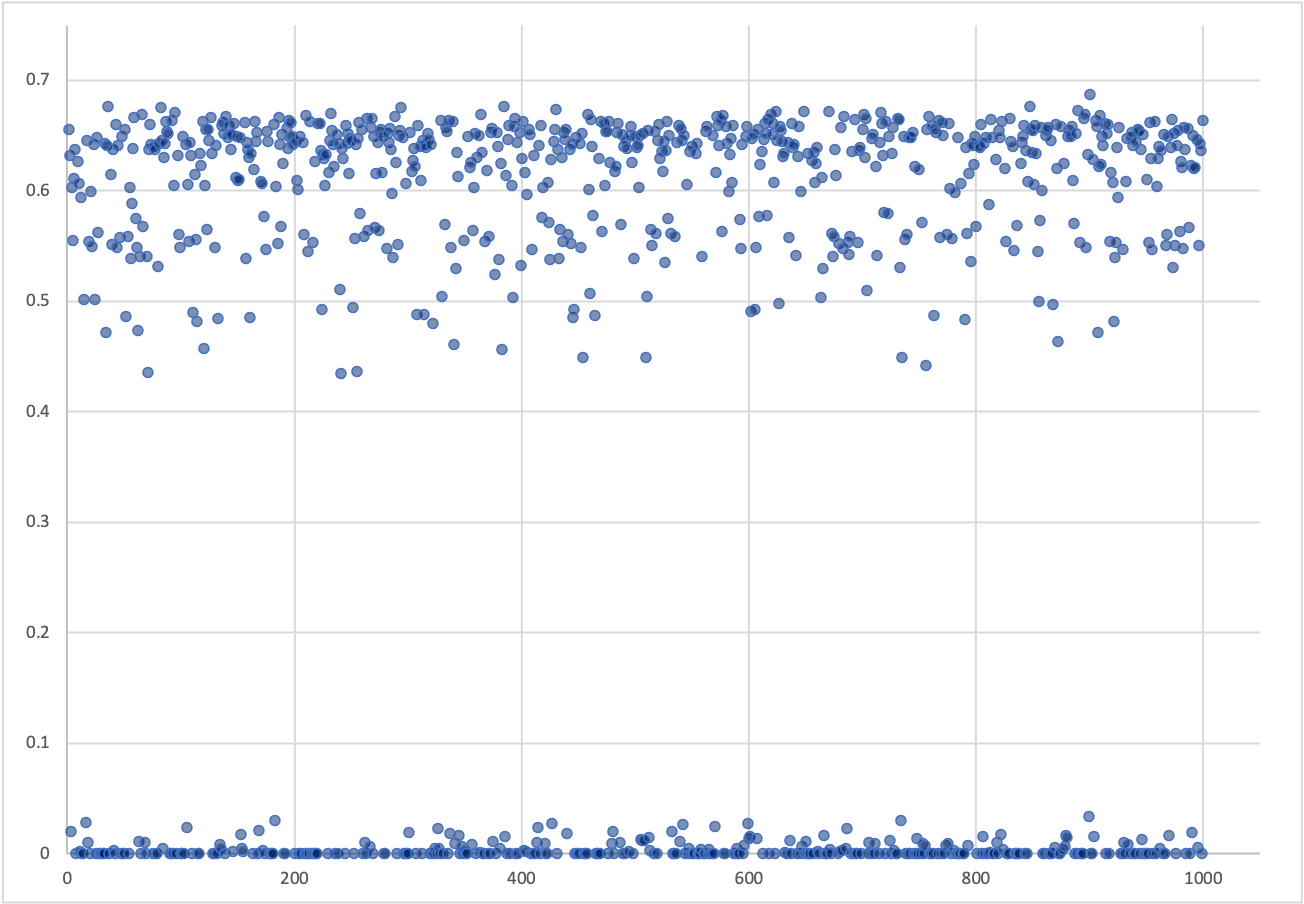}
		\caption{$k=2, \lambda=1$}
		\label{fig:k2-l1}
	\end{subfigure}
	\hfill
\begin{subfigure}[b]{0.49\textwidth}
	\centering
	\includegraphics[width=\textwidth]{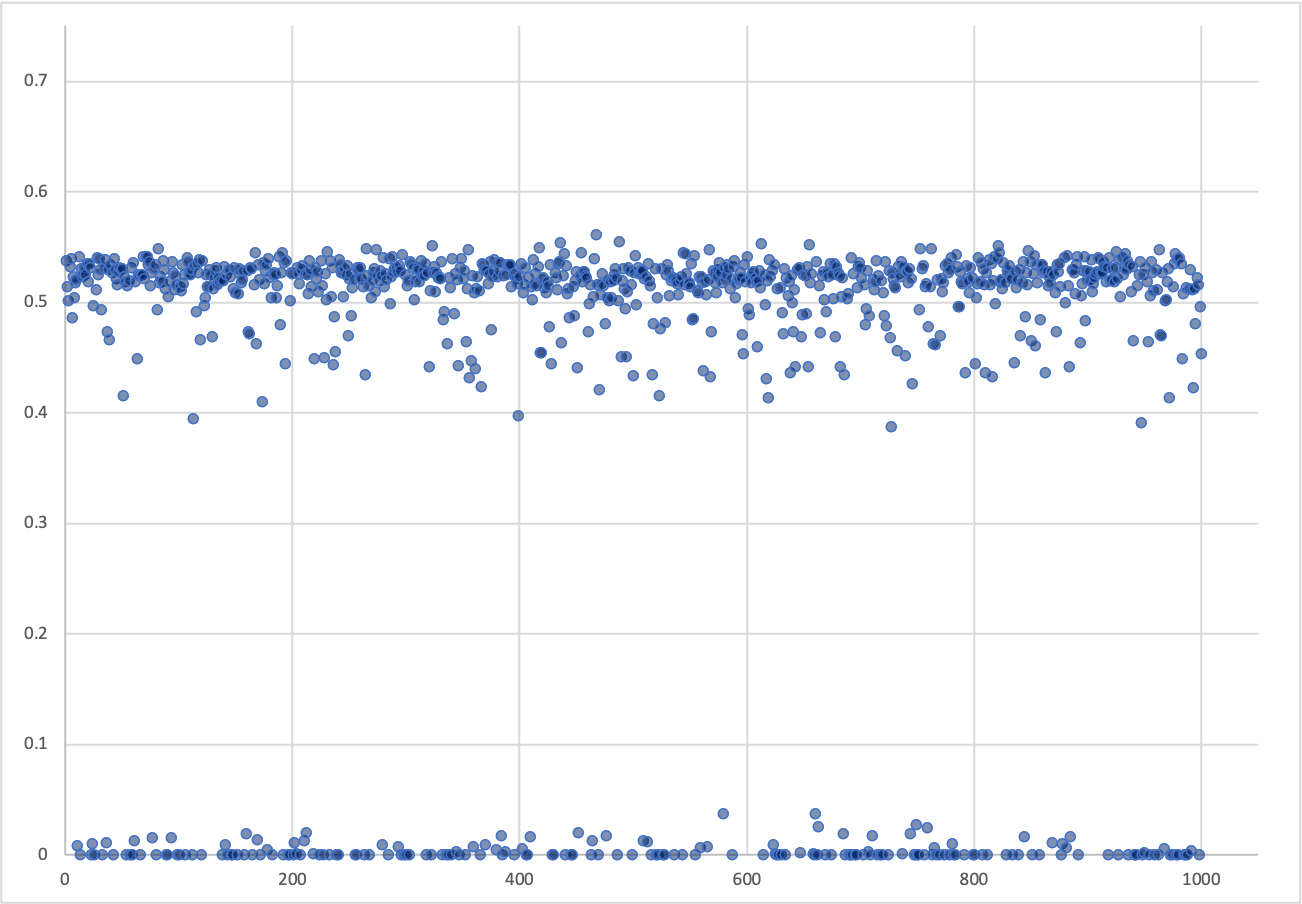}
	\caption{$k=3, \lambda=1$}
	\label{fig:k3-l1}
\end{subfigure}
\begin{subfigure}[b]{0.49\textwidth}
		\includegraphics[width=\textwidth]{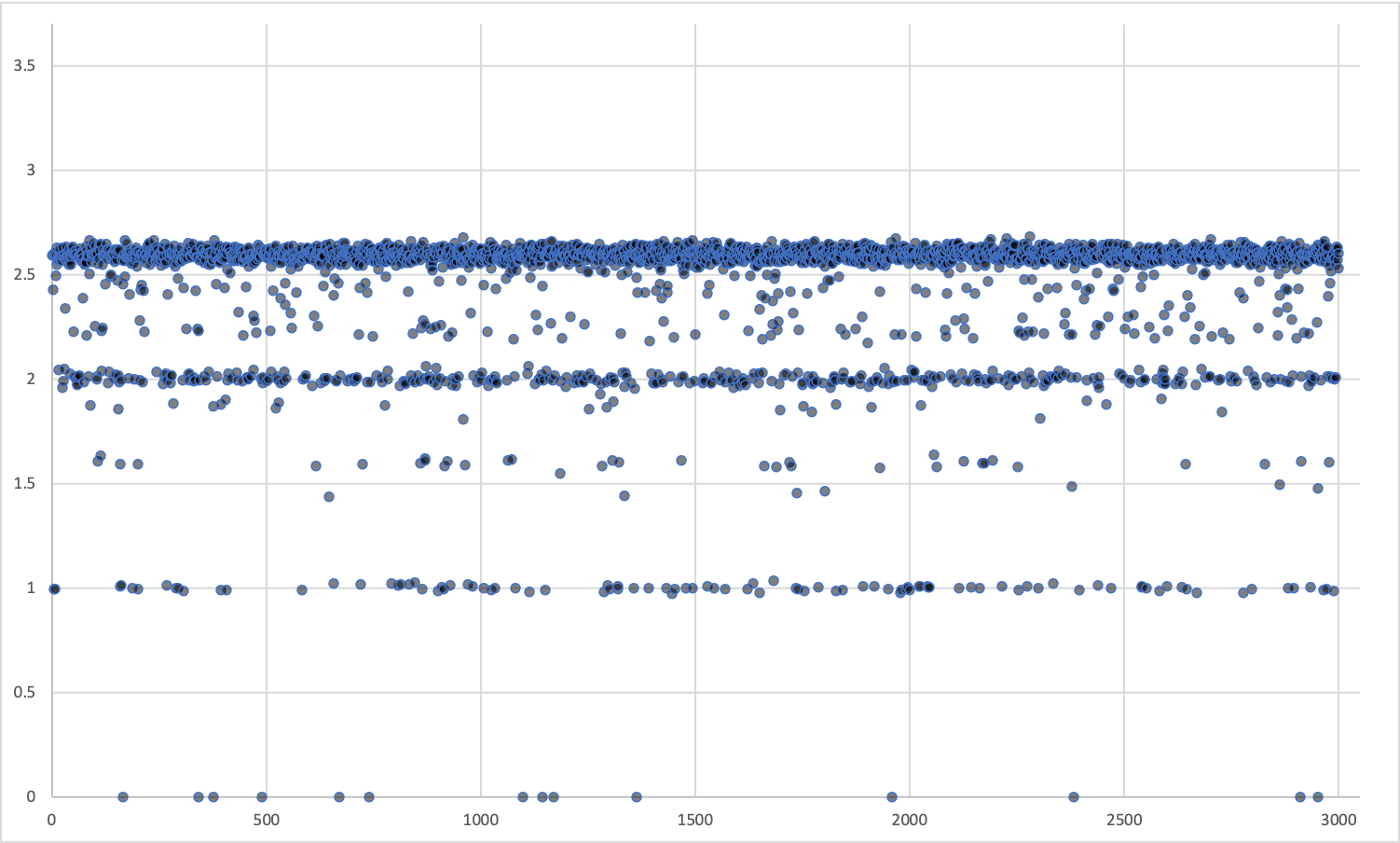}
		\caption{$k=2, \lambda=3$}
		\label{fig:k2-l3}
	\end{subfigure}
	\hfill
	\begin{subfigure}[b]{0.49\textwidth}
	\centering
	\includegraphics[width=\textwidth]{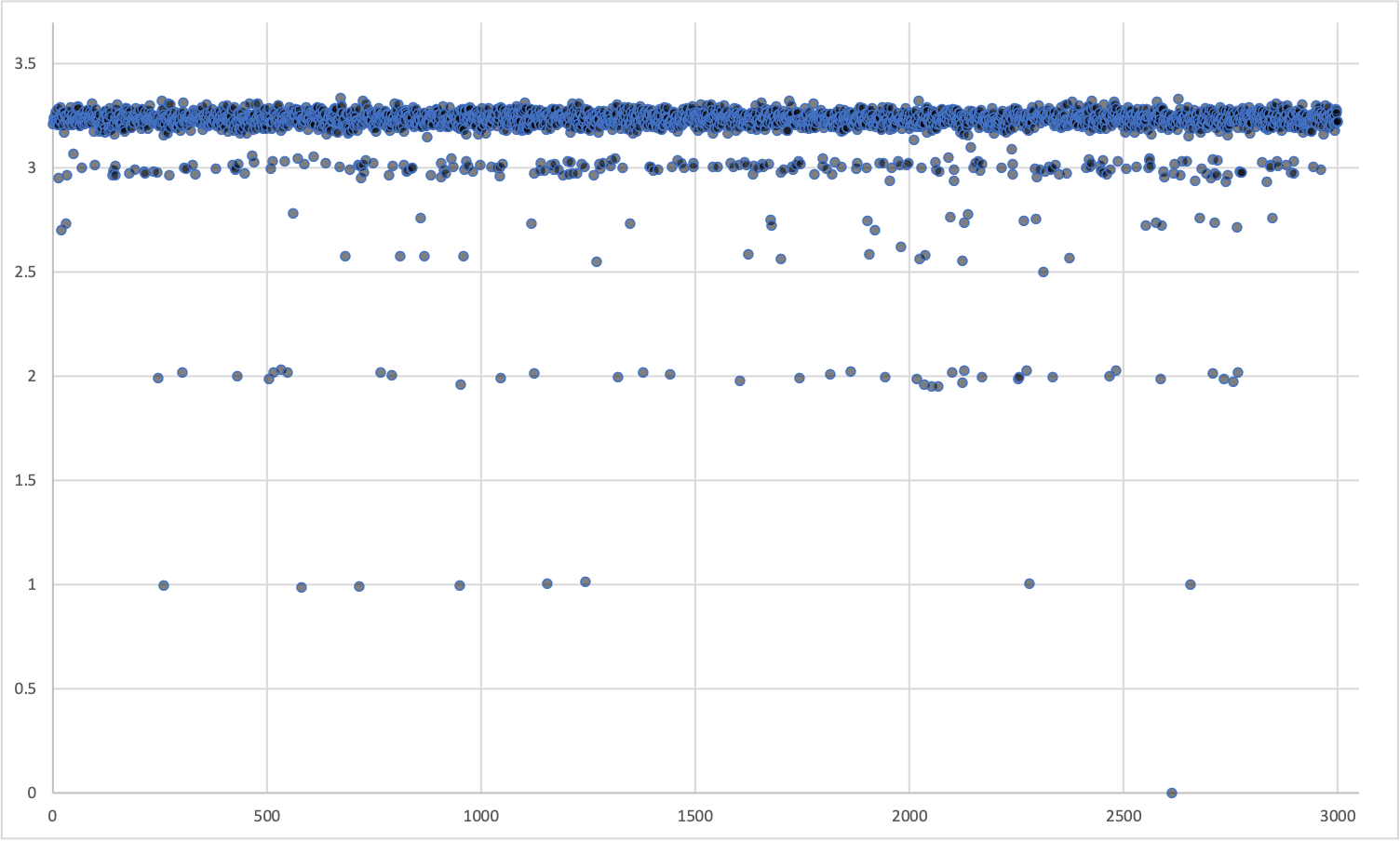}
	\caption{$k=3, \lambda=3$}
	\label{fig:k3-l3}
\end{subfigure}
\begin{subfigure}[b]{0.50\textwidth}
	\centering
	\includegraphics[height=129pt]{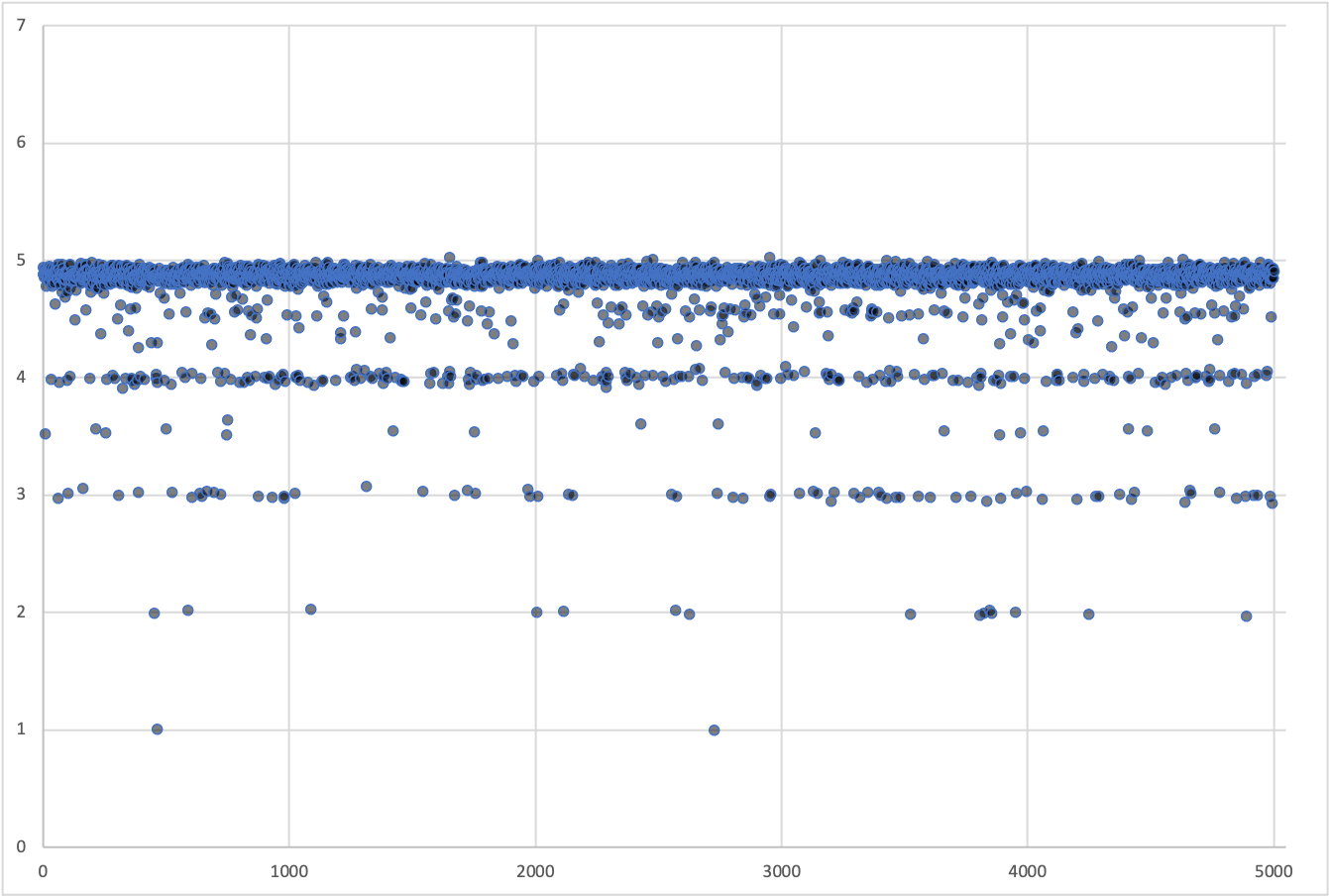}
	\caption{$k=2, \lambda=5$}
	\label{fig:k2-l5}
\end{subfigure}
\hfill
\begin{subfigure}[b]{0.48\textwidth}
	\centering
	\includegraphics[width=\textwidth]{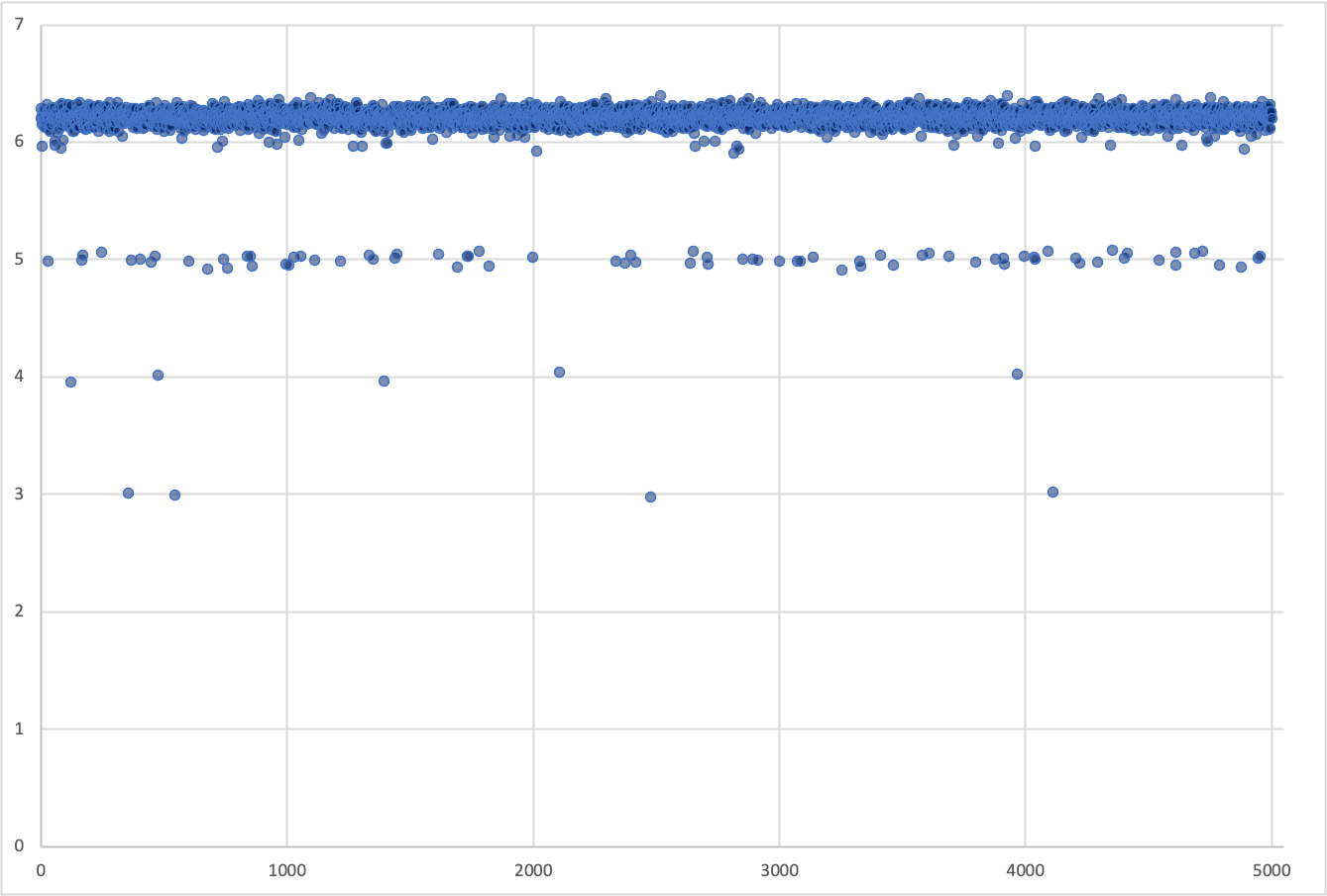}
	\caption{$k=3, \lambda=5$}
	\label{fig:k3-l5}
\end{subfigure}
	\caption{Distribution of relative errors of individual edges shown in a random order along $x$-axis.}
	\label{fig:distribution}
\end{figure}

Finally, we report on another experiment supporting the conjecture of a positive average error in the supercritical regime. We simulated the CU-process on sparse random non-peelable 3-hypergraphs (i.e. $k=3$), namely 2-regular 3-hypergraphs with $2n$ edges and $3n$ vertices ($n$ parameter). These are sparsest possible non-peelable 3-hypergraphs, with degree 2 of each vertex. We observed that the average error for such graphs is concentrated around a constant value of $\approx 0.217$. Since the core size is linear in the supercritical regime, this experiment provides an evidence of a positive error in the general case. While this remains to be proved in general, in Section~\ref{sec:non-peelable} we  provide a proof for certain families of regular hypergraphs. 


\newpage

\section{Proofs of main results}
\label{sec:proofs}
Theorem~\ref{theo:subcriti} relies on properties of random hypergraphs. Case $k=2$ corresponds to  Erd\H{o}s-R\'enyi random graphs $G_{n,\lambda n}$ \cite{erdos1960evolution} which have been extensively studied \cite{FriezeKaronski16}. In particular, it is well known 
when $\lambda<1/2$ and $n$ gets large, $G_{n,\lambda n}$ is, w.h.p., a union of small connected components most of which are constant-size trees. 
That is, a random edge in $G_{n,\lambda n}$ is, w.h.p., in a tree of size $O(1)$. Thus, 
the proof amounts to showing that, for a fixed tree $T$ and a vertex $v\in T$, we have $\cvN/N=1+o(1)$ w.h.p.. 
We prove this in Section~\ref{sec:error_tree} for both $N$-uniform and $N$-balanced models. 
The proof for $k\geq 3$, given in Section~\ref{sec:proofk3}, requires more ingredients. An additional difficulty is that, for $\lambda<\lambda_k$, a random edge $e$ in $H_{n,\lambda n}^k$ may be in the giant component (if $\lambda\in(\frac1{k(k-1)},\lambda_k)$). However, we rely on the fact that the peeling level of $e$ is $O(1)$ w.h.p., and prove that for a vertex $v$ of bounded level, we have $\cvN/N=1+o(1)$ w.h.p. as $N\to\infty$, where the $o(1)$ term does not depend on
the size of the giant component.

\subsection{CU-process on a fixed tree}\label{sec:error_tree}

\subsubsection{Analysis in the $N$-uniform model}\label{sec:tree_N_uniform}
Consider a graph $G=(V,E)$, with $m$ edges, on which the CU-process is run, in the $N$-uniform model. 
Recall that $c_v(t)$ (resp. $c_e(t)$) denotes the value of the counter for $v$  (resp. for $e$) after $t$ steps, and $o_e(t)$ is the number of occurrences $e$ in the first $t$ steps.   
The aim of this Section is to prove the following result.

\begin{lemma}\label{prop:cv}
Let $T=(V,E)$ be a tree, on which the CU-process is run, in the $N$-uniform model. Let $m=|E|$. Then, for every vertex $v$ of $T$,  
there exist absolute positive constants $a_v,b_v$ such that, for any $N\geq 1$ and $x>0$, we have
\[
\mathbb{P}\Big(\mathrm{max}_{t\in[0..Nm]}|c_v(t)-t/m| \geq x\sqrt{N}\Big)\leq a_v\exp(-b_vx^2).
\]
\end{lemma}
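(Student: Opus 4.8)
The plan is to sandwich $c_v(t)$ between $t/m$ shifted by the worst fluctuation of the edge‑occurrence counts $o_e(t)$, and then to invoke a maximal inequality for those counts.

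\emph{Lower bound.} For any edge $e_0$ incident to $v$ one has $c_v(t)\ge c_{e_0}(t)\ge o_{e_0}(t)$ at every step $t$: the last inequality is the elementary fact recalled above (a draw of $e$ raises $c_e$ by exactly $1$, while the other draws never lower $c_e$). Hence $c_v(t)-t/m\ge o_{e_0}(t)-t/m$ for all $t$, so the lower tail of $c_v(t)-t/m$ is dominated by that of the single martingale $o_{e_0}(t)-t/m$.

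\emph{Upper bound (main step).} Here I would establish a \emph{deterministic} stability estimate: setting $\Delta(t):=\max_{s\le t}\max_{e\in E}|o_e(s)-s/m|$, there is a constant $C_v$ depending only on $T$ and $v$ such that $c_v(t)\le t/m+C_v\,\Delta(t)$ for every draw sequence and every $t$. I would prove this by induction on the tree rooted at $v$ (equivalently, along the peeling order): $c_v$ moves only when some incident edge $e_i=\{v,u_i\}$ is drawn while $c_v=\min_{w\in e_i}c_w$, i.e.\ $c_v\le c_{u_i}$; since by induction each child counter $c_{u_i}(t)$ is at most $t/m+C_{u_i}\Delta(t)$, the counter $c_v$ cannot overshoot $\max_i c_{u_i}$ by more than the number of incident‑edge draws occurring while $c_v$ already exceeds $t/m$, a quantity itself governed by $\Delta(t)$. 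Making this rigorous requires comparing the CU‑process on $T$ with auxiliary CU‑processes on the pendant subtrees $T_i\cup\{e_i\}$, and the tool for that is the monotonicity of the CU‑process (if two runs driven by the same draw sequence start from pointwise‑ordered counter vectors, the order is preserved at every step), which lets one dominate the child counters by processes decoupled from the rest of $T$. I expect this coupling/induction to be the principal obstacle: the dependence of $c_v$ on distant parts of the tree is genuine, so the naive bound $c_v(t)\le\sum_{e\ni v}o_e(t)$ (which carries the wrong leading constant for internal vertices, and can even be attained up to a factor for adversarial sequences) must be improved by a global argument that exploits the balancedness of the $o_e$.

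\emph{Probabilistic ingredient and conclusion.} For each fixed $e$, the process $o_e(t)-t/m=\sum_{s\le t}(\mathbf{1}[\mathrm{draw}_s=e]-1/m)$ is a martingale with increments bounded by $1$ in absolute value, so Azuma's maximal inequality gives $\mathbb{P}\bigl(\max_{t\le Nm}|o_e(t)-t/m|\ge\lambda\bigr)\le 2\exp(-\lambda^2/(2Nm))$; a union bound over the $m$ edges yields $\mathbb{P}(\Delta(Nm)\ge x\sqrt N)\le 2m\exp(-x^2/(2m))$, whose prefactor does not depend on $N$. On the complement of this event, the lower and upper bounds above give $\max_{t\le Nm}|c_v(t)-t/m|<\max(C_v,1)\,x\sqrt N$; substituting $y=\max(C_v,1)\,x$ produces the claimed estimate with $a_v=2m$ and $b_v=1/\bigl(2m\max(C_v,1)^2\bigr)$ (for the small values of $y$ where this bound exceeds $1$ it holds vacuously). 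The same scheme, with the martingale replaced by the appropriate exchangeable analogue, would handle the $N$-balanced model in Section~\ref{sec:error_tree}.
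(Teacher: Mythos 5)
Your overall architecture coincides with the paper's: control $o_e(t)-t/m$ uniformly in $t$ by Doob/Azuma plus a union bound over edges, get the lower tail for free from $c_v(t)\geq o_{e_0}(t)$, and propagate an upper bound up the tree by induction. Your packaging is a little different and arguably cleaner: you isolate a purely deterministic stability estimate $c_v(t)\leq t/m+C_v\Delta(t)$ with $\Delta(t)=\max_{s\le t}\max_e|\ol{o_e}(s)|$ and invoke probability only once at the end, whereas the paper interleaves the two, propagating $(N,M)$-concentration level by level with constants $M/2,M/4,M/8$. Also, the detour through monotonicity and auxiliary marked CU-processes on the pendant subtrees $T_i\cup\{e_i\}$ is not needed here: the paper's induction runs directly in the full process, on \emph{peeling levels} rather than on a rooting at $v$, and reserves the monotonicity/marking machinery for the localization argument in the case $k\geq 3$ (Lemma~\ref{lem:monoton}). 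Rooting at $v$ is in fact slightly awkward, because then \emph{all} neighbours of $v$ are children and the trivial bound $c_v(t)\leq\max_i c_{u_i}(t)+1$ holds, but the children's counters feed back on $c_v$, which is exactly the circularity you flag.

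The one place where your argument is not yet an argument is the step you yourself identify as the principal obstacle, and the heuristic you offer for it does not work as written: the overshoot of $c_v$ above $\max_i c_{u_i}$ is \emph{not} bounded by ``the number of incident-edge draws occurring while $c_v$ already exceeds $t/m$'' (that count can be of order $\deg(v)\cdot t/m$ and is not governed by $\Delta(t)$). The correct mechanism, which is the content of the paper's Lemma~\ref{lem:dv}, is a last-crossing-time decomposition: fix one distinguished neighbour $v_{h+1}$ (the parent, or the marked vertex in your auxiliary process), set $d_v=\max(c_{v_1},\dots,c_{v_h})$ over the remaining neighbours, and let $t'$ be the last time before $t_0$ with $c_v(t')\leq d_v(t')$. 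On $[t',t_0]$ the counter $c_v$ can only increase when the distinguished edge $e=\{v,v_{h+1}\}$ is drawn, so $c_v(t_0)-c_v(t')\leq o_e(t_0)-o_e(t')$, and after centering this bounds the overshoot $c_v(t_0)-d_v(t_0)$ by $4\max(|\ol{o_e}(t_0)|,|\ol{o_e}(t')|,|\ol{d_v}(t_0)|,|\ol{d_v}(t')|)$, i.e.\ by $O(\Delta(t_0))$ once the $v_j$, $j\leq h$, are controlled by induction. With that lemma in hand your deterministic estimate does hold, with $C_v$ growing linearly in the peeling level of $v$, and the rest of your proof goes through verbatim; without it, the induction has no engine.
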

Lemma~\ref{prop:cv} implies that, in the $N$-uniform model,  
the final counter $\cvN$ of every vertex $v$ of $T$ is concentrated around $N$, with (sub-)Gaussian fluctuations  
of order $\sqrt{N}$. The same holds for the final counter $\ceN=\mathrm{min}(c_u^{(N)},c_v^{(N)})$ of every edge $e=(u,v)$ of $T$. On the other hand, 
the number $o_e(Nm)$ of times $e$ is chosen follows a binomial distribution $\mathrm{Bin}(Nm,1/m)$. Then, it is also concentrated around $N$, with Gaussian fluctuations
of order $\sqrt{N}$. This implies that $\ReN=O(1/\sqrt{N})=o(1)$ w.h.p. as $N$ gets large. 

We say that a family of events $\cE_M^N$ indexed by two parameters $N,M\geq 1$, is \emph{$(N,M)$-concentrated} 
if there are absolute constants $a,b>0$ such that, for every $N,M$, we have 
$\mathbb{P}(\cE_M^N)\leq a\exp(-bx^2)$, where $x=M/\sqrt{N}$. For $f(t)$ a (possibly random) quantity depending on $t\in[0..Nm]$, we use the notation $\ol{f}(t):=f(t)-t/m$. 
Thus, to prove Lemma~\ref{prop:cv}, we have  to show that 
for a fixed tree $T$ with $m$ edges on which the $N$-independent CU-process is run, and for $v$ a vertex of $T$, the event
$\{\mathrm{max}_{t\in[0..Nm]}|\ol{c_v}(t)|\geq M\}$ is $(N,M)$-concentrated. 

We proceed by induction on the peeling level $i$ of vertices. 
A vertex $v$ at level $0$ is a leaf. Let $e$ be its incident edge. 
It is easy to see that the counter of $v$ increases exactly at the steps when $e$ is drawn. Hence $c_v(t)=o_e(t)$ for $t\in[0..Nm]$.  
Doob's martingale maximal inequality combined with Hoeffding's inequality ensure 
 that, for every edge $e$ of $T$, we have 
 \[
 \mathbb{P}(\mathrm{max}_{t\in[0..Nm]}|\ol{o_e}(t)|\geq M)\leq 2\exp(-2x^2/m),\ \ \mathrm{where}\ x=M/\sqrt{N}.
 \] 
 Hence, for any leaf $v$ of $T$, the event $\{\mathrm{max}_{t\in[0..Nm]}|\ol{c_v}(t)|\geq M\}$ is $(N,M)$-concentrated.
 Moreover, for $v$ a vertex and $e$ an arbitrary edge incident to $v$, the fact that $c_v(t)\geq o_e(t)$  
 ensures that the event $\{\mathrm{min}_{t\in[0..Nm]}\ol{c_v}(t)) \leq -M\}$ is
 $(N,M)$-concentrated. It thus remains
  to show  that, for vertices of positive levels, the event $\{\mathrm{max}_{t\in[0..Nm]} \ol{c_v}(t) \geq M\}$ is $(N,M)$-concentrated.  
  
 The following statement will be useful to treat the inductive step. 
 \begin{lemma}\label{lem:dv}
Let $G$ be a graph on which a CU-process is run. Let $v$ be a vertex of $G$, with $v_1,\ldots,v_{h+1}$ its neighbours one of which (say $v_{h+1}$) is distinguished, and with $e$ denoting the edge $\{v,v_{h+1}\}$.  
For $t\in[0..Nm]$, let  $d_v(t)=\mathrm{max}(c_{v_1}(t),\ldots,c_{v_{h}}(t))$. Consider the event $\cE_M^N$
that there exists $t\in[0..Nm]$ such that $c_v(t)\geq d_v(t)+M$, and the event $\cF_M^N$ that there exists $t\in [0..Nm]$ such that $|\ol{o_e}(t)|\geq M/4$ or $|\ol{d_v}(t)|\geq M/4$. Then $\cE_M^N$ implies $\cF_M^N$.   
\end{lemma}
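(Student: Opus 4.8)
\medskip

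The plan is to prove the implication directly. Assume $\cE_M^N$ holds and fix a time $t^*\in[0..Nm]$ witnessing it, i.e. with $c_v(t^*)\geq d_v(t^*)+M$; the goal is to exhibit a time in $[0..Nm]$ at which $|\ol{o_e}|$ or $|\ol{d_v}|$ is at least $M/4$, which is exactly $\cF_M^N$. First I would introduce the last ``reset'' time $s:=\max\{\,t\in[0..t^*] : c_v(t)\leq d_v(t)\,\}$. It is well defined since $c_v(0)=d_v(0)=0$, and $s<t^*$ since $c_v(t^*)>d_v(t^*)$; by maximality of $s$, $c_v(t)>d_v(t)$ for every $t\in\{s+1,\ldots,t^*\}$.

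The structural heart of the argument is the claim
\[
c_v(t^*)-c_v(s)\ \leq\ o_e(t^*)-o_e(s),
\]
i.e. between steps $s$ and $t^*$ the counter $c_v$ can only be incremented at steps when $e$ is drawn. Since both $c_v$ and $o_e$ move by $0$ or $1$ per step, it suffices to check, for each step $\sigma\in\{s+1,\ldots,t^*\}$ at which $c_v$ increments, that the edge drawn is $e$. An edge not incident to $v$ cannot increment $c_v$, so the drawn edge is $\{v,v_j\}$ for some $j$; if $j\leq h$, a one-line case check of the CU update rule (according to whether $c_{v_j}$ also gets incremented) gives $c_v(\sigma)\leq c_{v_j}(\sigma)\leq d_v(\sigma)$, contradicting the maximality of $s$ because $\sigma>s$. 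Hence $j=h+1$, so $e$ is drawn and $o_e$ increments at step $\sigma$ as well. Summing the per-step inequality $c_v(\sigma)-c_v(\sigma-1)\leq o_e(\sigma)-o_e(\sigma-1)$ over $\sigma\in\{s+1,\ldots,t^*\}$ telescopes to the claim.

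Combining the claim with $c_v(s)\leq d_v(s)$, the monotonicity $d_v(s)\leq d_v(t^*)$, and $c_v(t^*)\geq d_v(t^*)+M$ yields
\[
M\ \leq\ \bigl(d_v(s)-d_v(t^*)\bigr)+\bigl(o_e(t^*)-o_e(s)\bigr).
\]
Rewriting each of the four counter values through $\ol{f}(t)=f(t)-t/m$, the drift terms $s/m$ and $t^*/m$ cancel exactly, leaving $M\leq \ol{d_v}(s)-\ol{d_v}(t^*)+\ol{o_e}(t^*)-\ol{o_e}(s)\leq |\ol{d_v}(s)|+|\ol{d_v}(t^*)|+|\ol{o_e}(t^*)|+|\ol{o_e}(s)|$. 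Consequently at least one of these four terms is $\geq M/4$, and since $s,t^*\in[0..Nm]$, this is precisely the event $\cF_M^N$.

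The only delicate point — and what I expect to be the main obstacle to a clean write-up — is the boundary step $\sigma=s+1$ in the structural claim. For $\sigma\geq s+2$ one has $c_v(\sigma-1)>d_v(\sigma-1)$, which immediately forbids any non-$e$ edge from incrementing $c_v$; but at $\sigma=s+1$ one only knows $c_v(s)\leq d_v(s)$, so the argument must instead go through maximality of $s$ together with the update rule. Getting this exactly right is what lets the final estimate keep the clean constant $M/4$ rather than degrading to $(M-1)/4$, which would be too weak for the $(N,M)$-concentration bound used afterwards. The rest is routine bookkeeping.
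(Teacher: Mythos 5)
Your proposal is correct and follows essentially the same route as the paper's proof: take the last time $s$ (the paper's $t'$) at which $c_v\leq d_v$, observe that on $(s,t^*]$ every increment of $c_v$ must come from a draw of $e$, telescope to $c_v(t^*)-c_v(s)\leq o_e(t^*)-o_e(s)$, pass to the centered quantities (the $t/m$ drifts cancel), and split $M$ among the four absolute values. Your careful handling of the boundary step $\sigma=s+1$ is a welcome tightening of the paper's one-line parenthetical, but it is the same argument.
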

\begin{proof}
If $\cE_M^N$ holds, let $t_0\geq 0$ be such that $c_v(t_0)\geq d_v(t_0)+M$. Let $t'=\mathrm{max}\{t\leq t_0\ |\ c_v(t)\leq d_v(t)\}$. The crucial point is that, in the interval $[t'..t_0]$, any step where $c_v$ increases occurs when $e$ is chosen (indeed, when $c_v(t)>\mathrm{max}(c_{v_1}(t),\ldots,c_{v_{h}}(t))$, choosing an edge $\{v,v_i\}$ with $i\in[1..h]$ yields no increase of $c_v$). Hence,  
$c_v(t_0)-c_v(t')\leq o_e(t_0)-o_e(t')$. Since $c_v(t')=d_v(t')$ and $c_v(t_0)\geq d_v(t_0)+M$ we conclude that $o_e(t_0)-o_e(t')\geq d_v(t_0)-d_v(t')+M$. Thus we also have 
$\ol{o_e}(t_0)-\ol{o_e}(t')\geq \ol{d_v}(t_0)-\ol{d_v}(t')+M$. Hence $\mathrm{max}(|\ol{o_e}(t_0)|, |\ol{o_e}(t')|,|\ol{d_v}(t_0)|,|\ol{d_v}(t')|)\geq M/4$, so that $\cF_M^N$ holds.  
\end{proof}

Let $i\geq 1$, and assume Lemma~\ref{prop:cv} holds for all vertices of level smaller than $i$. Let $v$ be a vertex of level $i$, for which we want to prove that the event $\{\mathrm{max}_{t\in[0..Nm]} \ol{c_v}(t) \geq M\}$ is $(N,M)$-concentrated. 
All neighbours $v_1,\ldots,v_{h+1}$ of $v$ have level at most $i-1$, except for one, say $v_{h+1}$, 
with its level in $\{i-1,i,i+1\}$ (respectively corresponding to $T_i$ having one vertex, two vertices, or at least three vertices).  

Let $e$ be the edge between $v$ and $v_{h+1}$. Let $\cE_M^N$ be the event that $\ol{c_v}(t)\geq M$ for some $t\in [0..Nm]$. 
If this holds, then one of the two events $\{\ol{c_v}(t)-\ol{d_v}(t)\geq M/2\}$ or $\{\ol{d_v}(t)\geq M/2\}$ holds. 
By Lemma~\ref{lem:dv}, the first event implies that $|\ol{d_v}(t)|\geq M/8$ or $|\ol{o_e}(t)|\geq M/8$ for some $t\in[0..Nm]$. Hence, the event 
$\{\max_{t\in[0..Nm]}\ol{c_v}(t)\geq M\}$ 
 implies that either the event  
 $\{\max_{t\in[0..Nm]}|\ol{d_v}(t)|\geq M/8\}$ (which is also the union of the events 
 $\{\max_{t\in[0..Nm]}|\ol{c_{v_j}}(t)|\geq M/8\}$ for $j\in[1..h]$) or the event $\{\max_{t\in[0..Nm]}|\ol{o_e}(t)|\geq M/8\}$ holds.
 Since these events are $(N,M)$-concentrated, we conclude (using the union bound) that the event  $\{\max_{t\in[0..Nm]}\ol{c_v}(t)\geq M\}$ is also $(N,M)$-concentrated, which concludes the proof of Lemma~\ref{prop:cv}.
 
\subsubsection{Analysis in the $N$-balanced model}\label{sec:tree_N_balanced}

\begin{lemma}\label{prop:tree_N_uniform}
Let $T$ be a tree, on which the CU-process is performed, in the $N$-balanced model. Then, for every edge $e$ of $T$, we have $\ReN=o(1)$ w.h.p. as $N\to\infty$. 
\end{lemma}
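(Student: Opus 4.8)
The goal is to transfer the concentration result of Lemma~\ref{prop:cv}, proved for the $N$-uniform model, to the $N$-balanced model. The natural strategy is a coupling argument: the $N$-balanced process is a conditioned version of the $N$-uniform process. More precisely, run the $N$-uniform CU-process on $T$ for some number of steps; conditioned on the event that exactly $N$ copies of each edge have been drawn in the first $Nm$ steps, the induced distribution on the sequence is exactly the $N$-balanced distribution. Since in the $N$-uniform model each $o_e(Nm)$ is $\mathrm{Bin}(Nm,1/m)$ and these are only weakly (negatively) correlated, the probability of the conditioning event is polynomially small in $N$ (of order $N^{-(m-1)/2}$ by a local central limit estimate for the multinomial), \emph{not} exponentially small. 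Because the bad events in Lemma~\ref{prop:cv} have probability $a_v\exp(-b_v x^2)$ with $x = M/\sqrt N$, choosing $M = C\sqrt{N\log N}$ for a large enough constant $C$ makes this exponentially small in $\log N$ to any desired polynomial power, which beats the polynomial cost of the conditioning. Hence in the $N$-balanced model we still get $\max_{t}|c_v(t) - t/m| = O(\sqrt{N\log N})$ w.h.p. for every vertex $v$.

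Concretely, the steps I would carry out are: (i) state the coupling/conditioning identity between the two models precisely; (ii) lower-bound $\mathbb{P}(\text{all } o_e(Nm)=N)$ by $c\, N^{-(m-1)/2}$ using Stirling (this is a fixed tree, so $m$ is a constant); (iii) apply Lemma~\ref{prop:cv} with $x = C\sqrt{\log N}$ so that each vertex's bad-event probability is at most $a_v N^{-b_v C^2}$, and union-bound over the constantly many vertices of $T$; (iv) divide by the conditioning probability to conclude that, in the $N$-balanced model, $\max_{t\in[0..Nm]}|c_v(t)-t/m| \le C\sqrt{N\log N}$ for all $v\in T$ w.h.p.; (v) evaluate at $t = Nm$: $c_e^{(N)} = \min_{u\in e} c_u^{(N)} = N + O(\sqrt{N\log N})$, and since $o_e^{(N)} = N$ exactly in the balanced model, $R_e^{(N)} = (c_e^{(N)} - N)/N = O(\sqrt{\log N / N}) = o(1)$ w.h.p.

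The main obstacle is step (ii): one needs a reasonably clean lower bound on the probability that a multinomial with $m$ equiprobable categories and $Nm$ trials lands exactly at the balanced point $(N,\ldots,N)$. This is standard (it is $\frac{(Nm)!}{(N!)^m}m^{-Nm} = \Theta(N^{-(m-1)/2})$ by Stirling), but one must be careful that the exponent $(m-1)/2$ is a fixed constant (true, since $T$ is a fixed finite tree) so that it is genuinely dominated by the $\exp(-b_v C^2 \log N)$ saving for $C$ large. A mild alternative avoiding the exact local limit computation: condition instead on the weaker event that each $o_e(Nm)$ lies in $[N - N^{3/4}, N+N^{3/4}]$, which has probability $1-o(1)$, run the balanced process only on the remaining "slack", and argue the discrepancy this introduces into each counter is $O(N^{3/4}) = o(N)$ — but the direct conditioning argument above is cleaner and gives the sharper $\sqrt{N\log N}$ bound, so that is the route I would take. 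No genuinely new idea beyond Lemma~\ref{prop:cv} is needed; the content is purely the transfer between the two input models.
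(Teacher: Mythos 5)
Your proposal is correct and follows essentially the same route as the paper: both view the $N$-balanced model as the $N$-uniform model conditioned on the (polynomially likely, $\Theta(N^{-(m-1)/2})$ for fixed $m$) event that every edge occurs exactly $N$ times, and then invoke Lemma~\ref{prop:cv} with a deviation threshold whose failure probability beats that polynomial factor. The only cosmetic difference is the threshold choice — the paper uses $N^{2/3}$ (so the bad event has stretched-exponential probability $e^{-bN^{1/3}}$, dominating any polynomial without tuning constants), whereas your $C\sqrt{N\log N}$ requires picking $C$ large relative to $(m-1)/2$; both work.
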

\begin{proof}
Note that the $N$-balanced model is just the $N$-uniform model conditioned on all edges occurring exactly $N$ times, which happens with probability $\Theta(n^{-m/2})$ if $T$ has $m$ edges. 
Let $u$ be an extremity of $e$. 
By Lemma~\ref{prop:cv}, there exists a positive constant $b$ such that, in the $N$-uniform model,
\[
\mathbb{P}(c_u(Nm)\geq N+N^{2/3})=O(e^{-b\, n^{1/3}}).
\] 
Hence, in the $N$-balanced model, we have 
\[
\mathbb{P}(c_u^{(N)}\geq N+N^{2/3})=O(n^{m/2}e^{-b\, n^{1/3}})=o(1),
\] 
and thus
\[
\mathbb{P}(\ReN\geq N^{-1/3})=\mathbb{P}(\ceN\geq N+N^{2/3})\leq \mathbb{P}(c_u^{(N)}\geq N+N^{2/3})=o(1),
\]
which ensures that $\ReN=o(1)$ w.h.p. as $N\to\infty$. 
\end{proof}

\subsection{Proof of Theorem~\ref{theo:subcriti} for $k=2$}\label{sec:subcritical2}
We use the well-known property~\cite{erdos1960evolution} that, for fixed $\lambda\in(0,1/2)$,  a random edge $e$ in $G_{n,\lambda n}$ is w.h.p. in a tree of size $O(1)$. 
Precisely, if we let $\cE^m_{n,\lambda}$ be the event that $e$ belongs to a tree of size at most $m$, then we have the property that, for every $\epsilon>0$,
there exist $m$ and $n_0$ such that $\mathbb{P}(\cE^m_{n,\lambda})\geq 1-\epsilon$ for $n\geq n_0$. 
 By Lemma~\ref{prop:tree_N_uniform}, there exists $N_0$ (depending on $m$) such that, in the $N$-balanced or $N$-uniform model,  
for every tree $T$ with at most $m$ edges, and every edge $e\in T$, 
we have $\mathbb{P}(\ReN\geq\epsilon)\leq \epsilon$ for $N\geq N_0$. 
Hence, for $n\geq n_0$ and $N\geq N_0$,  
if we perform the CU-process on $G_{n,\lambda n}$ in the $N$-balanced model (note that the $N$-balanced model holds separately on every connected component),  
and draw a random edge $e$, 
we have 
\[
\mathbb{P}(\ReN\geq\epsilon)\leq\mathbb{P}(\ReN\geq\epsilon\ |\ \cE^m_{n,\lambda})+\mathbb{P}(\neg\cE^m_{n,\lambda})\leq 2\epsilon.
\]
This means that $\ReN=o(1)$ w.h.p., when $n$ and $N$ grow.  In the $N$-uniform model, the same argument holds, using the fact that the total number of times 
 edges in $T$ are drawn is concentrated around $Nm$.

\subsection{Proof of Theorem~\ref{theo:subcriti} for $k\geq 3$}\label{sec:proofk3}

The proof partly follows the same lines as for $k=2$, but requires additional arguments, in particular a suitably extended notion of peelability. 
A \emph{marked hypergraph} is a hypergraph $H=(V,E)$ where some of the vertices are marked. The subset of marked vertices is denoted $V_{\infty}$. When performing a CU-process on a marked hypergraph, the counters of unmarked vertices are (as usual) initially $0$, while the counters at marked vertices are initially (and remain) $+\infty$. 
When peeling, the marked vertices are not allowed to be peeled (even when they are incident to a unique edge). We define $H_0=H$, and iteratively for $i\geq 0$, we define $V_i$  as the set of non-marked vertices that are leaves or isolated vertices in $H_i$, 
$E_i$ as the set of edges of $H_i$ incident to vertices in $V_i$, and $H_{i+1}$ to be the hypergraph obtained from $H_i$ by deleting 
the vertices in $V_i$ and the edges in $E_i$. A vertex in $V_i$ is said to have \emph{level $i$}. 
Then $H$ is called \emph{peelable} if every unmarked vertex is peeled at some step. 
 
 Following the exact same lines as in the proof of Lemma~\ref{prop:cv} (induction on the vertex-levels, and a straightforward adaptation of Lemma~\ref{lem:dv} to hypergraphs) and Lemma~\ref{prop:tree_N_uniform}, we obtain:

\begin{lemma}\label{prop:cvH}
Let $H=(V,E)$ be a peelable connected marked hypergraph, on which the CU-process is run, in the $N$-uniform model. Let $m=|E|$, and for $t\in[0..Nm]$, let $c_v(t)$ be the 
value of the counter of $v$ after $t$ steps.  
Then, for every unmarked vertex $v$ of $H$,    
there exist absolute positive constants $a_v,b_v$ such that, for any $N\geq 1$ and $x>0$, we have
\[
\mathbb{P}\Big(\mathrm{max}_{t\in[0..Nm]}|c_v(t)-t/m| \geq x\sqrt{N}\Big)\leq a_v\exp(-b_vx^2).
\]
In the $N$-uniform or $N$-balanced model, for every unmarked vertex $v$ of $H$, we have $c_v^{(N)}/N=1+o(1)$ w.h.p. as $N\to\infty$.
\end{lemma}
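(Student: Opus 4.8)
The plan is to replay the tree argument of Section~\ref{sec:error_tree} essentially verbatim, carrying the marked structure along, and to pin down the few points where hypergraphs and the $+\infty$ counters at marked vertices need extra care. Two facts serve as black boxes: for any edge $e$ and any vertex $w\in e$, $o_e(t)\le c_e(t)\le c_w(t)$ (the inequality noted before Lemma~\ref{lem:observation}); and, by Doob's maximal inequality and Hoeffding's inequality, for every edge $e$ the event $\{\max_{t\in[0..Nm]}|o_e(t)-t/m|\ge M\}$ is $(N,M)$-concentrated. From the first fact, every unmarked vertex $v$ of a connected marked hypergraph with at least one edge is incident to some edge $e'$ (a connected hypergraph with an edge has no isolated vertices; the case $m=0$ is vacuous), so $c_v(t)-t/m\ge o_{e'}(t)-t/m$, and the lower-tail event $\{\min_{t\in[0..Nm]}(c_v(t)-t/m)\le -M\}$ is $(N,M)$-concentrated. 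It therefore suffices to prove, by induction on the marked peeling level $i$, that for every unmarked vertex $v$ of level $i$ the upper-tail event $\{\max_{t\in[0..Nm]}(c_v(t)-t/m)\ge M\}$ is $(N,M)$-concentrated; with the lower tail this gives the displayed inequality on setting $M=x\sqrt{N}$.

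For $i=0$, an unmarked vertex $v$ is a leaf with unique incident edge $e$: it is incremented only when $e$ is drawn, and by at most one each time, so $c_v(t)\le o_e(t)$, while $c_v(t)\ge o_e(t)$ by the first black box; hence $c_v(t)=o_e(t)$ and the Hoeffding bound applies. For $i\ge 1$, $v$ has degree $\ge 2$ in $H$; let $e$ be the unique incident edge of $v$ that survives into $H_i$, and $f_1,\dots,f_r$ (with $r\ge 1$) its other incident edges in $H$. Drawing $f_j$ can increase $c_v$ only when $c_v(t)\le\min_{w\in f_j\setminus\{v\}}c_w(t)$, so setting $d_v(t):=\max_{1\le j\le r}\min_{w\in f_j\setminus\{v\}}c_w(t)$ (a nondecreasing quantity, since counters only increase), the proof of Lemma~\ref{lem:dv} applies unchanged: if $c_v(t_0)\ge d_v(t_0)+M$ and $t'$ is the last time $\le t_0$ with $c_v(t')\le d_v(t')$, then $c_v(t')=d_v(t')$ and every increment of $c_v$ in $[t'..t_0]$ comes from $e$, forcing $\max_{t\in[0..Nm]}|d_v(t)-t/m|\ge M/8$ or $\max_{t\in[0..Nm]}|o_e(t)-t/m|\ge M/8$.

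The one genuinely new point is that $d_v(t)$ is finite and is dominated by counters of \emph{unmarked} vertices of level $<i$. Indeed each $f_j$ ($1\le j\le r$) is deleted by the peeling process strictly before level $i$ (only $e$ survives into $H_i$), hence $f_j$ is incident to a vertex peeled at some level $<i$; this vertex is not $v$ (of level $i$) and is not marked (marked vertices are never peeled), so it is an unmarked $w_j\in f_j\setminus\{v\}$ of level $<i$. Thus $\min_{w\in f_j\setminus\{v\}}c_w(t)\le c_{w_j}(t)<\infty$, so $d_v(t)-t/m\le\max_{1\le j\le r}(c_{w_j}(t)-t/m)$; by the induction hypothesis each event $\{\max_{t\in[0..Nm]}|c_{w_j}(t)-t/m|\ge M/8\}$ is $(N,M)$-concentrated, and a union bound over the finitely many $j$, together with the $o_e$-term, closes the induction exactly as in the proof of Lemma~\ref{prop:cv}.

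Finally, to deduce $\cvN/N=1+o(1)$ w.h.p.: in the $N$-uniform model take $M=N^{2/3}$ (so $x=N^{1/6}$), giving $\mathbb{P}(|\cvN-N|\ge N^{2/3})\le a_v\exp(-b_vN^{1/3})\to 0$. In the $N$-balanced model, argue as in Lemma~\ref{prop:tree_N_uniform}: for a fixed $H$, the balanced model is the uniform model conditioned on the event that every edge occurs exactly $N$ times, whose probability is only polynomially small in $N$, so the super-polynomially small bound $a_v\exp(-b_vN^{1/3})$ on $\{\cvN\ge N+N^{2/3}\}$ survives the conditioning; combined with the deterministic bound $\cvN\ge o_{e'}^{(N)}=N$ valid in the balanced model, this gives $\cvN/N=1+o(1)$ w.h.p. \textbf{The main obstacle} is not the probabilistic estimate (Doob, Hoeffding and a union bound over a bounded-size ``relevant neighbourhood'' --- identical to the tree case) but the bookkeeping of the previous paragraph: one must verify that marked ($+\infty$) vertices never enter $d_v$ and that $d_v$ is always dominated by unmarked counters of strictly smaller peeling level.
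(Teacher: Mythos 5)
Your proof is correct and follows essentially the same route as the paper, which itself only states that Lemma~\ref{prop:cvH} follows ``the exact same lines'' as Lemma~\ref{prop:cv} (with a straightforward adaptation of Lemma~\ref{lem:dv}) and Lemma~\ref{prop:tree_N_uniform}; your choice of $d_v(t)=\max_{j}\min_{w\in f_j\setminus\{v\}}c_w(t)$ and the verification that every non-surviving incident edge contains an unmarked vertex of level $<i$ are precisely the adaptation the paper leaves implicit. The only loose ends are trivial: a level-$i$ vertex may be isolated rather than a leaf in $H_i$ (then no edge $e$ survives and $c_v(t)-d_v(t)\le 1$ for all $t$, so that case is immediate), and the lower tail of $d_v$ is controlled via $d_v(t)\ge o_{f_j}(t)$ rather than via the $c_{w_j}$.
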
  

For $H$ a hypergraph, and for $v$ a vertex of finite level in $H$, a vertex $u\in H$ is called a \emph{descendant} of $v$ if there is a sequence $v_0=u,v_1,\ldots,v_r=v$ of vertices such that, for each $j\in[1..r-1]$, $v_{j}$ and $v_{j+1}$ are incident to a same edge, and the level of $v_{j+1}$ is larger than the level of $v_j$. 

\begin{lemma}\label{lem:Hv}
For a hypergraph $H$, and for a vertex $v$ of $H$ of finite level, let $D_v$ be the set formed by $v$ and it descendants, and let $E_v$ be the set of edges
that are incident to at least one vertex from $D_v$. Let $H_v$ be the marked hypergraph formed by the edges in $E_v$ and their incident vertices, where the marked vertices
are those not in $D_v$.  Then $H_v$ is peelable.
\end{lemma}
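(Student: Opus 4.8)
The plan is to show that the peeling process on the marked hypergraph $H_v$, with the set $D_v$ (the vertex $v$ together with its descendants) as the only potentially-peelable vertices, exhausts all of $D_v$. The natural strategy is to reuse the peeling order already present in $H$: I would argue that each $u\in D_v$ gets peeled in $H_v$ at a step determined by its level in the original hypergraph $H$, reversed. Concretely, let $L$ be the level of $v$ in $H$ (finite by hypothesis), and note that every descendant $u$ of $v$ has level $\mathrm{level}_H(u)\le L$. I would process the vertices of $D_v$ in order of \emph{decreasing} level in $H$: first peel all vertices of $D_v$ at level $L$ (which includes $v$ itself, and possibly some descendants at the same level), then those at level $L-1$, and so on down to level $0$.

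The key step is the inductive claim: after removing from $H_v$ all vertices of $D_v$ whose $H$-level is $>\ell$ together with their incident edges, every vertex $u\in D_v$ with $\mathrm{level}_H(u)=\ell$ is a leaf or isolated in the current hypergraph. To see this, fix such a $u$ and consider an edge $e$ of $H_v$ still incident to $u$. By definition of $H_v$, $e\in E_v$, so $e$ is incident to some $w\in D_v$; and by the definition of descendant (the level strictly increases along the connecting path, hence is monotone along any edge of the defining path), one checks that for an edge $e\in E_v$ incident to $u$, either $e$ is the unique edge witnessing that $u$ is a descendant along a path toward $v$ — in which case $e$ contains a vertex of $D_v$ of strictly larger $H$-level, already removed — or all other vertices of $e$ that lie in $D_v$ are descendants of $u$, hence have level $<\ell$ and are still present but will be handled later. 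Making this dichotomy precise is where care is needed: I need the structural fact that for $e\in E_v$ and $u\in e\cap D_v$, the vertices of $e$ split into those of $H$-level $\ge \mathrm{level}_H(u)+1$ along the path to $v$, and those of $H$-level $\le \mathrm{level}_H(u)-1$ that are descendants of $u$ — equivalently, $e$ cannot be incident to two distinct vertices of $D_v$ of the same level unless it is the peeling edge for both, which the level structure of $H$ forbids. Granting this, after the removal of all $D_v$-vertices of level $>\ell$, any surviving edge at $u$ has all of its $D_v$-vertices among the strict descendants of $u$, which are at level $<\ell$ and thus still present; but then $u$ is incident to at most one such edge (its own peeling edge in $H$), since in $H$ a level-$\ell$ leaf-at-step-$\ell$ has a unique incident edge surviving into $H_\ell$. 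So $u$ is a leaf or isolated, hence peelable in $H_v$ at this stage (it is unmarked, being in $D_v$). Iterating down to $\ell=0$ peels all of $D_v$, so $H_v$ is peelable.

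The main obstacle I anticipate is pinning down exactly the edge-structure statement in the previous paragraph: one must carefully verify, purely from the definition of descendant and the properties of the $H$-peeling process, that no edge of $E_v$ can be incident to two distinct level-$\ell$ vertices of $D_v$ for the same $\ell$, and that each level-$\ell$ vertex of $D_v$ retains exactly one incident edge once all higher-level $D_v$-vertices are stripped. This amounts to transporting the standard fact ``in $H_\ell$, a vertex of level $\ell$ has degree $1$'' to the sub-hypergraph $H_v$, using that removing edges not in $E_v$ and marked vertices only decreases degrees. Once that bookkeeping is in place the induction is routine, and the lemma follows. $\Box$
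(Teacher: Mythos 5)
Your overall strategy --- reusing the levels of the peeling of $H$ to exhibit a peeling order for $H_v$ --- is the right one, and you correctly spotted the key closure property (a neighbour of $u\in D_v$ of strictly smaller level is itself a descendant of $v$, hence unmarked in $H_v$). But your induction runs in the wrong direction, and its central claim is false. You propose to peel the vertices of $D_v$ in order of \emph{decreasing} $H$-level, starting with $v$ itself, and your inductive claim asserts that once all $D_v$-vertices of level $>\ell$ are removed, every level-$\ell$ vertex of $D_v$ is a leaf or isolated. Take $H$ to be the path with edges $\{a,b\}$ and $\{b,c\}$, and $v=b$: then $a$ and $c$ have level $0$, $b$ has level $1$, $D_v=\{a,b,c\}$, and $H_v=H$ with no marked vertices. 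For $\ell=1$ nothing gets removed, yet $b$ has degree $2$, so it is neither a leaf nor isolated and cannot be peeled first. The point is that a vertex of level $i$ only \emph{becomes} a leaf after the lower-level vertices around it have been peeled, so the peeling of $H_v$ must proceed by \emph{increasing} level, exactly as in $H$ (not ``reversed''). The auxiliary ``structural fact'' you flag as the delicate step is also false as stated: for $k\geq 3$ an edge can perfectly well contain two distinct vertices of $D_v$ of the same level (e.g.\ two level-$0$ leaves sharing an edge with a level-$1$ vertex), and no such statement is actually needed.

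The repair is short once the induction is turned around. Induct on $i=\mathrm{level}_H(u)$ for $u\in D_v$. If $i=0$, then $u$ is a leaf (or isolated) in $H$, hence also in $H_v$, and is peeled at step $0$. If $i\geq 1$, then by the definition of the level every edge incident to $u$, except at most one, was deleted before step $i$ of the peeling of $H$, i.e.\ contains a vertex $v_e$ with $\mathrm{level}_H(v_e)<i$; prepending $v_e$ to the increasing-level path from $u$ to $v$ shows $v_e\in D_v$, so $v_e$ is unmarked in $H_v$ and, by the induction hypothesis, is peeled during the peeling of $H_v$, taking the edge $e$ with it. Hence all but at most one edge incident to $u$ disappears, $u$ becomes a leaf or isolated, and is peeled (one checks along the way that the level of a vertex of $D_v$ is the same in $H_v$ as in $H$). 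Since the unmarked vertices of $H_v$ are exactly those of $D_v$, this proves $H_v$ is peelable, with no ordering claim about same-level vertices required.
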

\begin{proof}
Let $u$ be a vertex in $D_v$, of level $i$. If $i=0$, then $u$ is a leaf in $H$. It is also a leaf in $H_v$, and is immediately peeled. 
Otherwise, except for possibly one incident edge, $u$ has at least one neighbour $v_e$ of level smaller than $i$  in every incident edge $e$ (note that $v_e$ has to be in $D_v$). 
By induction, $v_e$ and $e$ are peeled in the peeling  process of $H_v$. Hence, $u$ is peeled as well during the peeling process on $H_v$ (by induction as well, the level of a vertex in $D_v$ is actually the same in $H$ and in $H_v$). 
\end{proof}

\begin{remark}\label{rk:shell_ball}
In a hypergraph $H=(V,E)$, the \emph{distance} between two vertices $u$ and $v$ is the minimum number of edges to traverse in order to reach $v$ from $u$.
For $r\geq 1$ and $v\in V$, the ball $B_H(v,r)$ is the set of vertices at distance at most $r$ from $v$. Clearly, if $v$ is at level $i$, then every vertex in $D_v$ is in 
$B_H(v,i)$, and every vertex of $H_v$ is in $B(v,i+1)$. \dotfill
\end{remark}

\begin{lemma}[Lemma~3 in \cite{molloy2005cores}]\label{lem:I}
Let $k\geq 3$, and let $\lambda\in(0,\lambda_k)$. For every $\epsilon>0$, there exist $I(\epsilon)$ and $n_0(\epsilon)$ such that, 
for $v$ a random vertex in $H^k_{n,\lambda n}$, the probability that $v$ has level at most $I(\epsilon)$ is at least $1-\epsilon$, for $n\geq n_0(\epsilon)$.
\end{lemma}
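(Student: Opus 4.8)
The plan is to reduce the claim to a statement about the Poisson Galton-Watson hypertree that arises as the local weak limit of $H^k_{n,\lambda n}$, and then to invoke the peeling fixed‑point analysis of \cite{molloy2005cores} (the one that defines $\lambda_k$ in the first place). First I would rephrase the target: for a uniformly random vertex $v$ of $H^k_{n,\lambda n}$, $\mathbb{P}(\mathrm{level}(v)>I)$ is exactly the expected fraction of vertices surviving the first $I$ rounds of the peeling process, so it suffices to exhibit, for each $\epsilon>0$, an index $I$ for which this expected surviving fraction stays below $\epsilon$ for all large $n$. The key structural observation is that ``$\mathrm{level}_H(v)\le I$'' is a \emph{local} event: since one peeling round propagates information along at most one edge, whether $v$ is removed within the first $I$ rounds is a function of the sub‑hypergraph induced by the vertices within some distance $r=r(I,k)$ of $v$ (one checks by induction on $i$ that the ball $B_H(v,i+1)$, in the notation of Remark~\ref{rk:shell_ball}, suffices for level $i$).

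Next I would use the standard Benjamini-Schramm local convergence of sparse Erd\H{o}s-R\'enyi hypergraphs: as $n\to\infty$, the radius‑$r$ neighbourhood of a random vertex of $H^k_{n,\lambda n}$ converges in distribution to the radius‑$r$ neighbourhood of the root in the Galton-Watson hypertree $\mathbb{T}$ in which every vertex is incident to $\mathrm{Poisson}(k\lambda)$ edges (its degree being asymptotically $\mathrm{Binomial}(\lambda n,k/n)$) and each edge carries $k-1$ fresh vertices; the probability of a short cycle inside $B_H(v,r)$ is $O(1/n)$, which is what makes the limit a tree. Combined with the locality above, this gives $\mathbb{P}(\mathrm{level}_H(v)>I)\to\gamma_I:=\mathbb{P}_{\mathbb{T}}(\text{the root survives }I\text{ peeling rounds})$ for each fixed $I$, and the sequence $(\gamma_I)_{I\ge 0}$ is non‑increasing with limit $\gamma_\infty=\mathbb{P}_{\mathbb{T}}(\text{the root lies in the }2\text{-core of }\mathbb{T})$.

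The crux is then to show that $\gamma_\infty=0$ when $\lambda<\lambda_k$. For $\lambda\le\frac1{k(k-1)}$ this is immediate, since $\mathbb{T}$ is a.s.\ finite and a finite tree has empty $2$‑core; for $\frac1{k(k-1)}<\lambda<\lambda_k$, where $\mathbb{T}$ is infinite with positive probability, I would run the peeling on $\mathbb{T}$ ``from the leaves inward'': letting $x_i$ be the probability that the subtree hanging below a fixed non‑root vertex (reached through one of its incident edges) is entirely peeled after $i$ rounds, one gets a monotone recursion $x_{i+1}=\Phi_\lambda(x_i)$ with $x_0=0$, where $\Phi_\lambda$ is an explicit expression built from the Poisson generating function $z\mapsto e^{k\lambda(z-1)}$, and the root's survival probability is a simple function of $x_\infty=\lim_i x_i$. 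The threshold $\lambda_k$ is, by definition, the value below which this iteration from $x_0=0$ converges to $1$, so $x_\infty=1$, the root is a.s.\ peeled, and $\gamma_\infty=0$; this is precisely what is proved in \cite{molloy2005cores}, and the same iteration yields the fact recalled in Section~\ref{sec:hash} that the core of $H^k_{n,\lambda n}$ has $o(n)$ vertices w.h.p.

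To finish: given $\epsilon>0$, use the crux to pick $I=I(\epsilon)$ with $\gamma_I\le\epsilon/2$, then use the local convergence to pick $n_0(\epsilon)$ so that $\mathbb{P}(\mathrm{level}(v)>I)\le\gamma_I+\epsilon/2\le\epsilon$ for $n\ge n_0$. I expect the main obstacle to be exactly the third step — writing down the correct peeling recursion on $\mathbb{T}$ and proving that it reaches the trivial fixed point precisely for $\lambda<\lambda_k$ — which is the whole substance of \cite{molloy2005cores}; the reduction to a local event, the local weak convergence, and the final two‑$\epsilon$ bookkeeping are routine.
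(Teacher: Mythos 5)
This lemma is not proved in the paper at all: it is imported verbatim as Lemma~3 of \cite{molloy2005cores}, so there is no ``paper proof'' to compare against. Your sketch is a faithful reconstruction of the standard route to such a statement (locality of the event $\{\mathrm{level}(v)\le I\}$ within $B_H(v,I+1)$, local weak convergence of $H^k_{n,\lambda n}$ to the Poisson$(k\lambda)$ Galton--Watson hypertree, then a monotone peeling recursion on the limit tree whose iteration from the trivial starting point reaches the all-peeled fixed point exactly when $\lambda<\lambda_k$), and it is consistent with how the threshold $\lambda_k$ is actually characterized in \cite{molloy2005cores}. The first, second and fourth steps are indeed routine, and your two-$\epsilon$ bookkeeping at the end is fine.

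Two caveats. First, as you acknowledge, the crux --- that the recursion started at $x_0=0$ converges to the trivial fixed point precisely for $\lambda<\lambda_k$ --- is deferred to \cite{molloy2005cores} itself, so as a self-contained proof the proposal is circular; it is an explanation of why the cited result holds, not an independent derivation. Second, a minor but real imprecision: the quantity you iterate should not be ``the probability that the subtree hanging below a non-root vertex is entirely peeled after $i$ rounds.'' In hypergraph peeling an edge is deleted as soon as \emph{one} of its $k-1$ child vertices becomes a leaf; the other $k-2$ vertices of that edge and their subtrees need not be peeled for the edge to disappear. The correct recursion variable is the probability $q_i$ that a child vertex, reached through its parent edge, becomes removable toward that edge (i.e.\ all of its own child edges are deleted) within $i$ rounds; an edge is then deleted with probability $1-(1-q_i)^{k-1}$, giving $q_{i+1}=\exp\bigl(-k\lambda(1-q_i)^{k-1}\bigr)$, and the root survives iff at least two of its incident edges are never deleted from below. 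With that correction the argument goes through as you describe.
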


\begin{lemma}\label{lem:monoton}
Let $H=(V,E)$ be a hypergraph, and let $E'$ be a subset of $E$. 
Let $V'$ (resp. $W$) be the subset of vertices incident to at least one edge from $E'$ (resp. incident only to edges in $E'$). Let $H'=(V',E')$ be the marked
hypergraph where the unmarked vertices are those in $W$.  
Consider a CU-process on $H$, with $S$ the sequence of items (each item an edge of $H$),  and for $v\in V$ let $c_v$ be the final value of the counter of $v$. Let $S'$ 
be the subsequence of $S$ composed by the items in $E'$. Consider the CU process on $H'$ where the sequence of    items is $S'$.  
For $v\in W$, let $c_v'$ be the final value of the counter of $v$. Then $c_v'\geq c_v$. 
\end{lemma}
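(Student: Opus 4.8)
The plan is to establish a step-by-step, coupled version of the inequality and then specialise it to the final counters. Write $S=(e_1,e_2,\dots)$, let $c_v(t)$ be the counter of $v$ after the first $t$ items of $S$ have been processed by the CU-process on $H$, let $S'$ be the subsequence of $S$ consisting of the items lying in $E'$, and let $c_v'(t')$ be the counter of $v$ after the first $t'$ items of $S'$ have been processed by the CU-process on $H'$ (so that the vertices of $V'\setminus W$ keep counter $+\infty$ throughout). For each $t$, let $\phi(t)$ be the number of items among $e_1,\dots,e_t$ belonging to $E'$, so that after $t$ steps on $S$ and after $\phi(t)$ steps on $S'$ exactly the same prefix of $E'$-items has been processed. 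I would prove, by induction on $t$, the invariant
\[
c_v'(\phi(t))\ \geq\ c_v(t)\qquad\text{for every }v\in W\text{ and every }t\geq 0,
\]
which at $t=|S|$ is exactly the claim $c_v'\geq c_v$.

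The base case $t=0$ is trivial. For the inductive step I would process $e_{t+1}$ and distinguish whether $e_{t+1}\in E'$. If $e_{t+1}\notin E'$ then $\phi$ and all the $c_v'$ are unchanged, and since by definition of $W$ every edge incident to a vertex of $W$ lies in $E'$, the $e_{t+1}$-step changes no $c_v$ with $v\in W$ either, so the invariant persists. If $e_{t+1}\in E'$, both processes advance one step on this edge; for $v\in W$ there is nothing to check unless $v\in e_{t+1}$, and even then the only non-immediate case is the one in which $c_v$ increments at this step — i.e. $v$ attains the minimum of $e_{t+1}$ in $H$ at time $t$ — while $c_v'(\phi(t))=c_v(t)$ (equality, since otherwise $c_v'(\phi(t))\geq c_v(t)+1$ already, the counters being integers). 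In that case I would argue that $v$ also attains the minimum of $e_{t+1}$ in the $H'$-process at time $\phi(t)$, hence increments there too: for $u\in e_{t+1}$ with $u\in V'\setminus W$ we have $c_u'(\phi(t))=+\infty$, while for $u\in e_{t+1}\cap W$ the inductive hypothesis gives $c_u'(\phi(t))\geq c_u(t)\geq c_v(t)=c_v'(\phi(t))$, the middle inequality because $v$ is a minimum of $e_{t+1}$ in $H$. Hence $c_v'(\phi(t)+1)=c_v(t)+1=c_v(t+1)$, which closes the induction.

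The only substantive point is the last observation: a marked (infinite) vertex can never be the minimum of an edge, so the restricted process never misses an increment that the global process applies to a vertex of $W$ — this is precisely what makes the monotonicity go in the direction $c_v'\geq c_v$ rather than the opposite. Everything else is bookkeeping: keeping track of the alignment $\phi$, and noting that a vertex of $W$ witnesses exactly the same sequence of incident-edge events in both processes (which is why the induction must be run over all of $W$ simultaneously, as is needed when handling the neighbours $u\in W$ of $v$). I do not expect a genuine obstacle; the care is entirely in stating the invariant over all of $W$ at once and in the case analysis.
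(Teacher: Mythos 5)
Your proof is correct. It takes a somewhat different route from the paper's, though the two share the same underlying mechanism. The paper first proves a general monotonicity property of the CU-process with respect to initial counter values (if $f(v)\leq g(v)$ for all $v$, then the final counters satisfy $c_f(v)\leq c_g(v)$), and then observes that the restricted process on $H'$ with input $S'$ produces, on $W$, the same counters as the \emph{full} process on $H$ with input $S$ started from the initial assignment that is $0$ on $W$ and $+\infty$ elsewhere (edges outside $E'$ touch no vertex of $W$ and become no-ops, and marked vertices stay at $+\infty$); the lemma then follows by comparing with the all-zero initial assignment. You instead couple the two processes directly, aligning them via the counting function $\phi$ and carrying the invariant $c_v'(\phi(t))\geq c_v(t)$ for all $v\in W$ by induction on $t$. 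Your key step --- that when $v$ attains the minimum of $e_{t+1}$ in $H$ and $c_v'(\phi(t))=c_v(t)$, the marked neighbours (at $+\infty$) and the inductive hypothesis on the unmarked neighbours force $v$ to attain the minimum in $H'$ as well --- is exactly the content of the paper's monotonicity lemma, unrolled for this particular pair of configurations. The paper's formulation is more modular (the monotonicity of the update rule, which can be phrased as the new value of $c_v$ being $\max\bigl(c_v,1+\min_{u\in e}c_u\bigr)$, is reusable and makes the subsequence alignment disappear), while yours is more self-contained and incidentally yields the slightly stronger statement that the inequality holds at every aligned time, not only at the end. Both are complete; there is no gap in your argument.
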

\begin{proof}
For an assignment $f:V\to \mathbb{N}\cup\{+\infty\}$ of initial values to the vertex counters, let $c_f:V\to\mathbb{N}\cup\{+\infty\}$ be the function giving the final vertex counters after performing the CU-process on an input sequence $S$. 
It is easy to check (by induction on the length of $S$) that the CU-process is monotonous: if $f(v)\leq g(v)$ for all $v\in V$, then $c_f(v)\leq c_g(v)$ for all $v\in V$. 
Now, if we define $f$ as the function assigning initial value $0$ to vertices in $W$,
and initial value $+\infty$ to the other vertices, then  $c_v'=c_f(v)$ for all $v\in W$. On the other hand, if $f_0$ denotes the function assigning initial value $0$ to all vertices, then
$c_v=c_{f_0}(v)$ for all $v\in V$. Hence, the monotonicity property ensures that $c_v'\geq c_v$ for all $v\in W$.
\end{proof}

We can now conclude the proof of Theorem~\ref{theo:subcriti} for $k\geq 3$. Let $\epsilon>0$. Let $I(\epsilon)$ and $n_0(\epsilon)$ be as given in Lemma~\ref{lem:I}.   
Let $s(\epsilon)$ be such that, for a random vertex $v\in H_{n,\lambda n}^k$, we have $\mathbb{P}\big(|B(v,I(\epsilon)+1)|\leq s(\epsilon)\big)\geq 1-\epsilon$ for all $n$.  
The existence of $s(\epsilon)$ easily follows from the known property that $H_{n,\lambda n}$ converges to a Galton-Watson branching
 process for the local topology, see e.g.~\cite[Prop 2.6]{dembo2010gibbs}.      
Let $v$ be a random vertex in $H_{n,\lambda n}$. Let $\cE_{n,\epsilon}$ 
be the event that the level of $v$ is at most $I(\epsilon)$, and the number of vertices in $H_v$ is at most $s(\epsilon)$.
Then, using Remark~\ref{rk:shell_ball} and the union bound, we have $\mathbb{P}(\cE_{n,\epsilon})\geq 1-2\epsilon$ for $n\geq n_0(\epsilon)$. 
From Lemma~\ref{prop:cvH}, there exists $N_0$ such that, for every marked $k$-uniform hypergraph $H$ with at most $s(\epsilon)$ vertices, and for every unmarked vertex $v$ 
of $H$, we have $\displaystyle\mathbb{P}\big(\cvN/N\ -1\geq \epsilon\big)\leq \epsilon$ for $N\geq N_0$. 
Let $e$ be a random edge in $H^k_{n,\lambda n}$, and let $v$ be a random vertex of $e$. Note that $v$ is distributed as a random vertex in $H_{n,\lambda n}^k$. 
With Lemma~\ref{lem:Hv} and Lemma~\ref{lem:monoton}, this implies that, 
conditioned on $\cE_{n,\epsilon}$, for  $N\geq N_0$ we have $\displaystyle\mathbb{P}\big(\cvN/N\ -1\geq \epsilon\big)\leq \epsilon$ in the $N$-balanced model, and also 
$\mathbb{P}(\ReN\geq \epsilon)\leq \epsilon$ (since $\ceN \leq \cvN$). Since $\mathbb{P}(\neg \cE_{n,\epsilon})\leq 2\epsilon$, we conclude that, for $n\geq n_0(\epsilon)$ and $N\geq N_0$, we have
$\mathbb{P}(\ReN\geq \epsilon)\leq 3\epsilon$. 
Hence, in the $N$-balanced model, $\ReN=o(1)$ w.h.p., when $n$ and $N$ grow. 
In the $N$-uniform model, the same argument holds again, using the fact that the number of times an edge in $H_v$ is drawn is concentrated around $Nm$, with $m$ the 
number of edges of $H_v$.

\subsection{Proof of Theorem~\ref{theo:supercrit}}
The \emph{excess} of a graph $G$ is $\mathrm{exc}(G)=|E|-|V|$. 
\begin{lemma}\label{lem:excess}
Let $G=(V,E)$ be a graph. Then, for the $N$-balanced model, we have 
 $\sum_{e\in E}\ReN\geq \frac1{2}\,\mathrm{exc}(G)$.
\end{lemma}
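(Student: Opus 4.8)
The goal is to show $\sum_{e\in E} R_e^{(N)} \geq \tfrac12 \mathrm{exc}(G)$ in the $N$-balanced model, where $R_e^{(N)} = (c_e^{(N)} - N)/N$. The plan is to track a conserved quantity — the total amount of "excess credit" in the system — by comparing the total number of increments actually performed on vertex counters with the total number that a hypothetical error-free process would perform.

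First, observe that in the $N$-balanced model each edge is drawn exactly $N$ times, so the total number of edge-draws is $N|E|$. At each step, the CU-process increments at least one vertex counter (the minimum one, possibly several tied ones). Let $\Delta_v$ be the total number of increments to $c_v$ over the whole process, so $c_v^{(N)} = \Delta_v$ and $\sum_{v\in V}\Delta_v \geq N|E|$, since every one of the $N|E|$ steps contributes at least one increment. Hence $\sum_{v\in V} c_v^{(N)} \geq N|E|$.

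Next I relate $\sum_v c_v^{(N)}$ to $\sum_e c_e^{(N)}$. For each vertex $v$, I want to "charge" its counter to one of its incident edges; the natural choice, invoking Lemma~\ref{lem:observation}, is that at the final time $t = Nm$ there is an edge $e(v)$ incident to $v$ with $c_{e(v)}^{(N)} = c_v^{(N)}$. Fix such an assignment $v \mapsto e(v)$. Since $G$ is a graph ($k=2$), each edge $e=\{u,w\}$ can be chosen by at most two vertices ($u$ and $w$), so for any edge $e$, the number of $v$ with $e(v)=e$ is at most $2$, and $c_{e(v)}^{(N)} \le c_e^{(N)}$ whenever $e(v) = e$ — in fact equality. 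Therefore
\[
\sum_{v\in V} c_v^{(N)} = \sum_{v \in V} c_{e(v)}^{(N)} \leq 2 \sum_{e\in E} c_e^{(N)}.
\]
Combining with $\sum_v c_v^{(N)} \ge N|E|$ gives $\sum_{e\in E} c_e^{(N)} \geq \tfrac12 N|E|$. But this alone is not quite enough; I need to extract the excess, so I should be more careful and not throw away the $|V|$ term. The refinement: the assignment $v \mapsto e(v)$ maps $|V|$ vertices into $E$ with each fiber of size $\le 2$, but more usefully I bound $\sum_v c_v^{(N)} = \sum_v c_{e(v)}^{(N)}$ and note that $\sum_e c_e^{(N)} \ge \sum_{e \in \mathrm{Im}(e(\cdot))} c_e^{(N)} \ge \frac12 \sum_v c_{e(v)}^{(N)}$ only captures half; instead I want $\sum_e c_e^{(N)} \ge \frac12(\sum_v c_v^{(N)} + (\text{something}))$. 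The cleaner route: since $c_e^{(N)} \ge N$ for every $e$ (as $c_e^{(N)} \ge o_e^{(N)} = N$), and $c_e^{(N)} = c_{e(v)}^{(N)}$ for each $v$ with $e(v)=e$, write
\[
2\sum_{e\in E} c_e^{(N)} \;\ge\; \sum_{v\in V} c_v^{(N)} \;+\; \sum_{e\in E : e \notin \mathrm{Im}} c_e^{(N)} \;\ge\; N|E| + N\big(|E| - |\mathrm{Im}|\big) \;\ge\; N|E| + N(|E| - |V|),
\]
using $|\mathrm{Im}| \le |V|$. This yields $\sum_{e} c_e^{(N)} \ge \tfrac12 N(2|E| - |V|) = N|E| + \tfrac12 N\,\mathrm{exc}(G)$. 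Dividing by $N$: $\sum_e R_e^{(N)} = \tfrac1N\sum_e(c_e^{(N)} - N) = \tfrac1N\sum_e c_e^{(N)} - |E| \ge \tfrac12\,\mathrm{exc}(G)$, as desired.

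The main obstacle — and the step to get exactly right — is the double-counting bound relating $\sum_v c_v^{(N)}$ to $\sum_e c_e^{(N)}$, specifically making sure the assignment $v \mapsto e(v)$ from Lemma~\ref{lem:observation} is exploited so that the "$-|V|$" loss is the only loss and the edges outside the image still contribute their guaranteed minimum $N$ each. One must also double-check the inequality $\sum_v c_v^{(N)} \ge N|E|$: each of the $N|E|$ steps increments the counter array total by at least $1$ (the minimum counter in the chosen pair always goes up), and counters start at $0$, so the final total is at least $N|E|$; this is the only place the CU-update rule (as opposed to, say, a rule that sometimes increments nothing) is used, and it is immediate. Everything else is bookkeeping.
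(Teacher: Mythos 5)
Your overall strategy is the same as the paper's --- lower-bound $\sum_{v}c_v^{(N)}$ by $N|E|$ via ``at least one increment per step,'' then transfer from vertices to edges using Lemma~\ref{lem:observation} and the fact that each edge has only two endpoints --- but the final double-counting step has a genuine gap, currently masked by an arithmetic slip. Write $n_e=|\{v: e(v)=e\}|\in\{0,1,2\}$, so that $\sum_e n_e=|V|$ and $\sum_v c_v^{(N)}=\sum_e n_e\,c_e^{(N)}$. The exact identity is
\[
2\sum_{e\in E}c_e^{(N)}\;=\;\sum_{v\in V}c_v^{(N)}\;+\;\sum_{e\in E}(2-n_e)\,c_e^{(N)},
\]
and to reach the lemma the second term must be credited at least $N\sum_e(2-n_e)=N(2|E|-|V|)$. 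You only credit $\sum_{e\notin\mathrm{Im}}c_e^{(N)}\geq N(|E|-|\mathrm{Im}|)\geq N(|E|-|V|)$: this counts each non-image edge once instead of twice and gives nothing for image edges having a single preimage, so it falls short by $N|E|$. Indeed, your stated bound $\sum_e c_e^{(N)}\geq\tfrac12 N(2|E|-|V|)$ equals $N|E|-\tfrac12 N|V|$, not $N|E|+\tfrac12 N\,\mathrm{exc}(G)$ (the two differ by $\tfrac12 N|E|$), and it only yields the vacuous conclusion $\sum_e R_e^{(N)}\geq -\tfrac12|V|$.

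The repair is small and is in effect what the paper does: perform the double counting on normalized, nonnegative quantities rather than on raw counters. Since $c_e^{(N)}\geq o_e^{(N)}=N$ for every edge, all $R_e^{(N)}=c_e^{(N)}/N-1$ are nonnegative; setting $R_v^{(N)}:=c_v^{(N)}/N-1$, your first inequality gives $\sum_v R_v^{(N)}\geq|E|-|V|=\mathrm{exc}(G)$, and $R_{e(v)}^{(N)}=R_v^{(N)}$ by the choice of $e(v)$. Since each edge occurs at most twice among the $e(v)$ and every $R_e^{(N)}$ is nonnegative, $2\sum_{e\in E}R_e^{(N)}\geq\sum_{v\in V}R_{e(v)}^{(N)}\geq\mathrm{exc}(G)$, which is the claim. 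Equivalently, keep your computation on raw counters but replace $\sum_{e\notin\mathrm{Im}}c_e^{(N)}$ by $\sum_e(2-n_e)c_e^{(N)}\geq N(2|E|-|V|)$.
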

\begin{proof}
During the CU process, each time an edge is drawn, the counter of at least one of its extremities is increased by $1$. Hence $\sum_{v\in V}\cvN\geq N |E|$. 
Hence, with the notation $R_v^{(N)}:=\cvN/N-1$, we have $\sum_{v\in V}R_v^{(N)}\geq \mathrm{exc}(G)$. Now, by Lemma~\ref{lem:observation}, for each $v\in V$, there exists an edge $e_v$ incident to $v$ such that $c_{e_v}^{(N)}=\cvN$ (if several incident edges have this property, an arbitrary one is chosen). Hence,  $\sum_{v\in V}R_{e_v}^{(N)}\geq \mathrm{exc}(G)$. Note that, in this sum, every edge occurs at most twice (since it has two extremities), thus $\sum_{e\in E}\ReN\geq \frac1{2}\mathrm{exc}(G)$. 
\end{proof}

For $\lambda>1/2$, it is known~\cite[Theorem~6]{pittel2005counting} that there is an explicit constant $\tilde{f}(\lambda)>0$ such that the excess of the giant component $G'=(V',E')$ of 
$G_{n,\lambda n}$  is concentrated around $\tilde{f}(\lambda)n$, with fluctuations of order $\sqrt{n}$. Thus, $\mathrm{exc}(G')\geq \frac1{2}\tilde{f}(\lambda)n$ 
w.h.p. as $n\to\infty$.  
Hence, by Lemma~\ref{lem:excess}, w.h.p. as $n\to\infty$  (and for any $N\geq 1$), we have
\[
\mathrm{err}_N(G_{n,\lambda n})\geq \frac1{\lambda n}\sum_{e\in E'}\ReN\geq \frac1{2\lambda n}\,\mathrm{exc}(G') \geq \frac1{4\lambda} \tilde{f}(\lambda)=:f(\lambda).
\]

\section{Analysis for some non-peelable hypergraphs}
\label{sec:non-peelable}


Analysing the asymptotic behaviour of the relative error of the CU-process on arbitrary hypergraphs seems to be a challenging task,
 even if we restrict ourselves to $N$-uniform and $N$-balanced models, as we do in this paper.  Based on simulations, we expect that, for a fixed connected 
$k$-hypergraph $H=(V,E)$,  and for  $v\in V$, we have $\cvN/N=C_v+o(1)$ w.h.p. as $N\to\infty$, for an explicit constant $C_v\in[1,\mathrm{deg}(v)]$. Since the number of increments at each step lies in $[1..k]$, constants $C_v$ must verify $1\leq \frac1{|E|}\sum_{v\in V} C_v\leq k$, where $\frac1{|E|}\sum_{v\in V} C_v$ can be seen as the average number of increments at a step. 
 If $H$ is peelable,
then Lemma~\ref{prop:cvH} implies that
this concentration holds, with $C_v=1$. We expect that, if no vertex of $H$ is peelable, and if $H$ is ``sufficiently homogeneous'', then the constants $C_v$ should be all equal to the same constant $C>1$, and thus the relative error $\ReN$ of every edge is concentrated around $C-1>0$ w.h.p. as $N\to\infty$.  This, in particular, is supported by an experiment reported at the end of Section~\ref{sec:simulation}.

In this Section, we show that this is the case for a 
  family of regular hypergraphs which are very sparse ($O(\sqrt{|V|})$ edges) but have a high order (an edge contains $O(\sqrt{|V|})$ vertices). 
%
The \emph{dual} of a hypergraph $H$ is the hypergraph $H'$ where the roles of vertices and edges are interchanged:
the vertices of $H'$ are the edges of $H$, and the edges of $H'$ are the vertices of $H$
so that an edge of $H'$ corresponding to a vertex $v$ of $H$ contains those vertices that correspond to edges incident to $v$ in $H$. 
We consider here the hypergraph $K_{n}'$ dual to the complete graph $K_{n}$. 


\begin{lemma}\label{lem:complete}
Consider any CU-process on $K_{n}'=(V,E)$. At each step $t$, let $\mathrm{minedges}(t)$ be the set of edges of $E$ whose counter $c_e(t)$ is minimal over all edges. 
We have $|\mathrm{minedges}(t)|\geq 2$. 
Let $e$ be the edge selected at time $t$. Then, at time $t$, $e$ is the only edge whose counter increases (by $1$), except when $e\in\mathrm{minedges}(t)$
and $|\mathrm{minedges}(t)|=2$, in which case the counter of the other edge in $\mathrm{minedges}(t)$ also increases (by $1$). 
\end{lemma}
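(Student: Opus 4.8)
Recall that $K_n'$ is the hypergraph dual to $K_n$: its vertices correspond to the edges of $K_n$, and for each vertex $i$ of $K_n$ there is an edge $E_i$ of $K_n'$ consisting of all edges of $K_n$ incident to $i$; thus $|E_i| = n-1$ for every $i$, and any two edges $E_i, E_j$ of $K_n'$ share exactly one vertex (the edge $\{i,j\}$ of $K_n$). So $K_n'$ has $\binom{n}{2}$ vertices and $n$ edges, is $(n-1)$-uniform, and every vertex lies in exactly $2$ edges. The CU-process here selects an edge $E_i$ at each step and increments those of its vertices attaining the minimum vertex-counter within $E_i$.

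**Plan.** The first claim, $|\mathrm{minedges}(t)| \geq 2$, I would deduce directly from Lemma~\ref{lem:observation}: take any vertex $v$ realizing the global minimum counter value, i.e.\ $c_v(t) = \min_{e} c_e(t)$ (such a $v$ exists by picking a globally minimal vertex counter; note $c_e(t) = \min_{v\in e} c_v(t)$, so the global min over edges equals the global min over vertices). The vertex $v$ belongs to exactly two edges $E_i, E_j$ of $K_n'$, and by Lemma~\ref{lem:observation} at least one of them, say $E_i$, satisfies $c_{E_i}(t) = c_v(t)$; but actually both do, since $c_{E_i}(t) \le c_v(t)$ always and $c_v(t)$ is the global minimum, forcing $c_{E_i}(t) = c_{E_j}(t) = c_v(t)$. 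Hence both $E_i$ and $E_j$ lie in $\mathrm{minedges}(t)$, giving $|\mathrm{minedges}(t)| \geq 2$.

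**The increment claim.** Suppose $E_i$ is selected at step $t$. The vertices of $E_i$ whose counters increase are exactly those $v \in E_i$ with $c_v(t) = c_{E_i}(t)$. Each such vertex $v$ also lies in exactly one other edge $E_j$ (namely $E_j$ where $\{i,j\}$ is the corresponding edge of $K_n$), so after the increment $v$ still has $c_v = c_{E_i}(t)+1 \le c_{E_j}(t+1) +$ something — the point is to track which \emph{edge}-counters change. An edge-counter $c_{E_j}(t)$ changes only if \emph{every} vertex of $E_j$ attaining the min $c_{E_j}(t)$ gets incremented; but the only vertices incremented at step $t$ are the min-vertices of $E_i$, which all lie in $E_i$. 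A vertex of $E_j$ that is incremented must therefore be the unique common vertex $v_{ij} = E_i \cap E_j$. So $c_{E_j}$ can only change if $v_{ij}$ is the \emph{sole} minimizer of $c_{E_j}$, which in particular requires $c_{v_{ij}}(t) = c_{E_j}(t)$. Combined with $c_{v_{ij}}(t) = c_{E_i}(t)$ (since $v_{ij}$ is incremented it is a min-vertex of $E_i$), this forces $c_{E_i}(t) = c_{E_j}(t)$, and then $c_{E_i}(t)$ equals the global minimum $c_{E_j}(t)$ (as $c_{E_j}(t) \le c_{v_{ij}}(t) = c_{E_i}(t)$ and, symmetrically, a strict inequality somewhere would contradict minimality once we push further). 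I would argue: if $c_{E_i}(t)$ is \emph{not} the global minimum, then $\mathrm{minedges}(t) \subseteq V\setminus\{\text{the incremented vertices}\}$ in a suitable sense and no $c_{E_j}$ with $j\neq i$ decreases or rises past another, so no other edge-counter changes. If $c_{E_i}(t)$ \emph{is} the global minimum, then $E_i \in \mathrm{minedges}(t)$; each other $E_j$ with a common min-vertex $v_{ij}$ satisfies $c_{E_j}(t) = c_{E_i}(t)$ too, so $E_j \in \mathrm{minedges}(t)$; the only such $E_j$ whose counter can rise is one where $v_{ij}$ is the unique minimizer, and a short combinatorial check (using that distinct edges of $\mathrm{minedges}(t)$ pairwise meet in distinct vertices, all of counter value equal to the global min) shows this happens for at most one $j$, and precisely when $|\mathrm{minedges}(t)| = 2$: if three or more edges were at the global minimum, each pairwise intersection vertex is also at that value, so no edge in $\mathrm{minedges}(t)$ has a \emph{unique} minimizing vertex.

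**Main obstacle.** The delicate part is the last parity/counting argument: showing that among the edges $E_j$ sharing a minimizing vertex with the selected $E_i$, at most one has its counter actually rise, and that this occurs exactly when $\mathrm{minedges}(t)$ has size $2$. The cleanest route is to observe that if $E_j \in \mathrm{minedges}(t)$ and $j \ne i$, then the common vertex $v_{ij}$ has $c_{v_{ij}}(t) = c_{E_i}(t) = c_{E_j}(t)$, so $v_{ij}$ is a minimizing vertex of $E_j$; for $c_{E_j}$ to increase, $v_{ij}$ must be the \emph{only} such vertex of $E_j$, i.e.\ all other vertices of $E_j$ have strictly larger counter. But another edge $E_\ell \in \mathrm{minedges}(t)$, $\ell \notin\{i,j\}$, shares with $E_j$ the vertex $v_{j\ell}$, and $c_{v_{j\ell}}(t) = c_{E_j}(t)$ as well — so $v_{j\ell}$ is a second minimizing vertex of $E_j$, contradicting uniqueness. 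Hence if $|\mathrm{minedges}(t)| \ge 3$ no edge's counter rises except... wait, $E_i$'s own vertices are incremented, but $c_{E_i}$ itself rises only if \emph{all} its minimizing vertices are incremented, which they are by definition of the rule, so $c_{E_i}$ does rise — I must be careful to phrase the claim as ``except for $E_i$, only the other edge in $\mathrm{minedges}(t)$ when that set has size $2$.'' When $|\mathrm{minedges}(t)| = 2$, writing $\mathrm{minedges}(t) = \{E_i, E_j\}$, the common vertex $v_{ij}$ has counter $= c_{E_i}(t) = c_{E_j}(t)$, and I would check it is the unique minimizer of $E_j$ (any other vertex of $E_j$ lies in some $E_\ell \notin \mathrm{minedges}(t)$, hence has counter $\ge c_{E_\ell}(t) > c_{E_j}(t)$), so incrementing $v_{ij}$ does push $c_{E_j}$ up by $1$. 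Assembling these cases yields the statement; I expect the write-up to be short once the ``pairwise intersection vertices are all at the global min'' observation is in hand.
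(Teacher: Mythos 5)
Your argument is correct and follows essentially the same route as the paper's proof: both hinge on applying Lemma~\ref{lem:observation} to the intersection vertex of two edges (forcing pairwise intersections of minimal edges to sit at the minimal counter value), and both split on whether a third edge ties the counter of the candidate edge $e'$. Your reformulation via ``$c_{E_j}$ increases iff $v_{ij}$ is the unique minimizer of $E_j$ and a minimizer of $E_i$'' is just a repackaging of the paper's case analysis, so no substantive difference.
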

\begin{proof}
Every vertex $v\in V$ has degree~$2$. Considering a vertex $v$ of minimal counter, its two incident edges thus
have to be in $\mathrm{minedges}(t)$, therefore  $|\mathrm{minedges}(t)|\geq 2$.  

Let $e$ be the chosen edge at time~$t$. Clearly, $c_e$ increases (by $1$). 
Moreover, for every edge $e'$ such that $c_{e'}(t)\neq c_e(t)$, $c_{e'}$ does not increase.    
Let now $e'$ be such that $c_{e'}(t)=c_e(t)$, but there exists an edge $e''\notin\{e,e'\}$ with $c_{e''}(t)\leq c_{e'}(t)$. Let $v=e'\cap e''$. By Lemma~\ref{lem:observation}, we must have $c_v(t)=c_{e'}(t)$.  
Since $v$ is not incident to $e$, its counter does not increase, and neither does the counter of~$e'$.  
If there does not exist $e''\notin\{e,e'\}$ with $c_{e''}(t)\leq c_{e'}(t)$, then we must be in the situation where  $e\in\mathrm{minedges}(t)$ and $|\mathrm{minedges}(t)|=2$.
Let $v=e\cap e'$. Then, by Lemma~\ref{lem:observation}, we have
$c_v(t)=c_e(t)=c_{e'}(t)$, and any vertex $v'\neq v$ must satisfy $c_{v'}(t)>c_v(t)$ (since $v'$ has at least one incident edge different from $e,e'$). 
Thus, the counter of $v$ increases, and the counter of $e'$ also increases. 
\end{proof}

\begin{theorem}\label{theo:dual_complete}
For any fixed $n\geq 2$, in the $N$-uniform model (resp. in the $N$-balanced model), {the counter of each vertex $v\in K_n'$ satisfies $\cvN/N=n/(n-1)+o(1)$ w.h.p. as $N\to\infty$.}  
Hence, the relative error $\ReN$ of each edge $e$ in $K_{n}'$ satisfies $\ReN=1/(n-1)+o(1)$ w.h.p. as $N\to\infty$. 
\end{theorem}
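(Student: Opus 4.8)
The plan is to reduce the statement to a single concentration fact about the $n$ edge‑counters of $K_n'$. (Assume $n\ge 3$; the case $n=2$ is immediate, both edges of $K_n'$ always carrying the counter of the unique vertex.) Write $e_1,\ldots,e_n$ for the edges of $K_n'$ and $C_i(t)=c_{e_i}(t)$. Since every vertex of $K_n'$ has degree $2$, Lemma~\ref{lem:observation} together with $c_v(t)\ge c_e(t)$ for $e\ni v$ gives $c_v(t)=\max(C_i(t),C_j(t))$ when $v$ is the vertex shared by $e_i$ and $e_j$. By Lemma~\ref{lem:complete}, at each step the chosen edge's counter rises by $1$, and exactly one further edge‑counter rises by $1$ precisely at the \emph{double steps} (chosen edge minimal, exactly two edges attaining the minimum). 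Letting $D(t)$ count the double steps among the first $t$, a short induction gives $\min_iC_i(t)=D(t)$ (the minimum rises by $1$ at a double step, never otherwise) and $\sum_iC_i(t)=t+D(t)$. Hence, with $\Sigma(t):=\sum_iC_i(t)-n\min_iC_i(t)\ge 0$, one gets the exact identity $\Sigma(t)=t-(n-1)D(t)$, so $\min_iC_i(t)=\frac{t-\Sigma(t)}{n-1}$; as each $C_i(t)$, and hence each $c_v(t)$, lies between $\min_iC_i(t)$ and $\min_iC_i(t)+\Sigma(t)$, we obtain
\[
\frac{t}{n-1}-\Sigma(t)\ \le\ c_v(t),\ C_e(t)\ \le\ \frac{t}{n-1}+\Sigma(t)\qquad\text{for every }v\text{ and }e.
\]

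It follows that the whole theorem reduces to the single claim that $\Sigma(Nn)=o(N)$ w.h.p. as $N\to\infty$ ($n$ fixed): taking $t=Nn$ then gives $c_v^{(N)}/N=\frac{n}{n-1}+o(1)$, and $R_e^{(N)}=c_e^{(N)}/o_e^{(N)}-1=\frac1{n-1}+o(1)$, using $o_e^{(N)}=N$ in the $N$-balanced model and $o_e^{(N)}=N(1+o(1))$ w.h.p. in the $N$-uniform model (Chernoff plus a union bound over the $n$ edges). Note also that in the $N$-balanced model the inequality $c_e(Nn)\ge o_e(Nn)=N$ already forces $\min_iC_i(Nn)\ge N$, i.e. the deterministic bound $\Sigma(Nn)\le N$; only the improvement to $o(N)$ must be shown.

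To bound $\Sigma$, observe that the ``shape'' process $\big(C_i(t)-\min_jC_j(t)\big)_i$ is a Markov chain (time‑homogeneous in the $N$-uniform model) and that $\Sigma$ behaves like a Lyapunov function for it: when exactly two edges attain the minimum a double step occurs with probability $2/n$, so the conditional drift of $\Sigma$ is $1-\tfrac{2(n-1)}{n}<0$. I would establish that $\mathbb{E}[\Sigma(t)]$ is bounded uniformly in $t$ — equivalently $\Sigma(t)=o(t)$, equivalently (averaging the identity $\Sigma(t)=t-(n-1)D(t)$) that the long‑run fraction of double steps equals $\frac1{n-1}$ — which yields $\Sigma(Nn)=o(N)$ w.h.p.\ via Markov's inequality, and hence the theorem.

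The hard part is the configurations with three or more edges tied at the minimum: there no double step is possible and $\Sigma$ has drift $+1$. Such configurations are entered only immediately after a double step, when the edges sitting one level above the old minimum get absorbed into the new minimum, and are resolved one tie at a time as further minimal edges are selected; controlling the time (hence the accumulated positive drift of $\Sigma$) spent in them amounts to proving stability of a queue‑like subprocess, namely the number of edges just above the current minimum. I expect this excursion/queueing estimate — most cleanly packaged as a refined drift function that also penalises the number of minimal edges and the mass one level above them — to be the technical core. The $N$-balanced model requires only minor adaptations for the without‑replacement sampling, or can be handled by transferring the $N$-uniform estimate provided the drift argument is pushed to yield tails $\mathbb{P}(\Sigma(t)\ge u)$ exponentially small in $u$ uniformly in $t$.
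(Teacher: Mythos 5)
Your reduction is sound and coincides with the paper's: both start from Lemma~\ref{lem:complete}, derive that the minimum edge-counter equals the number of double steps and that $\sum_e c_e(t)=t+\min_e c_e(t)$, conclude that $\tfrac{t}{n-1}$ lies between the minimum and maximum edge-counters, and reduce the theorem to showing that the spread of the edge-counters is $o(N)$ w.h.p.\ (your $\Sigma(t)$ is just an $\ell_1$ version of the paper's $q(t)-p(t)$). The problem is that you do not actually prove this last, decisive step. You propose a drift/Lyapunov argument for $\Sigma$, correctly compute that the drift is negative only when exactly two edges are tied at the minimum and is $+1$ when three or more are tied, and then explicitly defer the control of the time spent in the latter configurations (``I would establish\ldots'', ``I expect this excursion/queueing estimate\ldots to be the technical core''). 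Note also that your remark that $\Sigma(t)=o(t)$ is ``equivalent to the long-run fraction of double steps being $\tfrac1{n-1}$'' is circular: that equivalence is just a restatement of the identity $\Sigma(t)=t-(n-1)D(t)$, not evidence for either claim. As written, the proposal establishes the easy deterministic bookkeeping but leaves the only probabilistic content of the theorem unproven.

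The paper closes this gap without any stationarity or queueing analysis, by a pairwise argument you could adopt directly: if $c_{e_1}(t)\geq c_{e_2}(t)+M$ and $t'$ is the last time before $t$ at which $c_{e_1}=c_{e_2}$, then throughout $[t'..t]$ the edge $e_1$ is never minimal, so by Lemma~\ref{lem:complete} its counter increases only when $e_1$ itself is drawn; comparing with $c_{e_2}(t)-c_{e_2}(t')\geq o_{e_2}(t)-o_{e_2}(t')$ forces one of the centered occurrence counts $\ol{o_{e_1}},\ol{o_{e_2}}$ to deviate by at least $M/4$ at $t$ or $t'$. Doob plus Hoeffding on the occurrence counts then gives $\mathbb{P}\bigl(\max_t (c_{e_2}(t)-c_{e_1}(t))\geq M\bigr)\leq a e^{-bM^2/N}$, hence $q(Nn)-p(Nn)=O(N^{2/3})$ with exponentially small failure probability, which is exactly the bound on your $\Sigma(Nn)$ that you need (and it is strong enough to survive the conditioning that transfers the result to the $N$-balanced model, which a mere $O(1)$ bound on $\mathbb{E}[\Sigma]$ via Markov's inequality would not be). This is the same mechanism as Lemma~\ref{lem:dv} in the tree analysis; substituting it for your unproven stability estimate would complete your proof.
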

\begin{proof}
Consider a CU-process performed on $K_{n}'=(V,E)$. At each step $t$, let $p(t)=\mathrm{min}(c_v(t),\ v\in V)=\mathrm{min}(c_e(t),\ e\in E)$, 
and let $q(t)=\mathrm{max}(c_e(t),\ e\in E)$.  
It easily follows from Lemma~\ref{lem:complete} that $\sum_{e\in E}c_e(t)=t+p(t)$. Therefore, $(n-1)p(t)\leq t\leq (n-1)q(t)$, so that $\frac{t}{n-1}\in[p(t),q(t)]$.  
We now consider the CU process on $K_{n}'$ in the $N$-uniform model. 
Let $e_1,e_2$ be two edges in $E$, and let $M\geq 1$. For $t\in[0..Nn]$, the event that $c_{e_1}(t)\geq c_{e_2}(t)+M$  
implies the event that $o_{e_1}(t)-o_{e_1}(t')\geq o_{e_2}(t)-o_{e_1}(t')+M$ for some $t'\in[0..t]$ (indeed, letting $t'$ be the last time before $t$ where $c_{e_1}=c_{e_2}$,
Lemma~\ref{lem:complete} ensures that all times in $[t'..t-1]$ where $c_{e_1}$ increases are due to chosing $e_1$), which 
is the same as the event that $\ol{o_{e_1}}(t)-\ol{o_{e_1}}(t')\geq \ol{o_{e_2}}(t)-\ol{o_{e_1}}(t')+M$ for some $t'\in[0..t]$, with the notation of Section~\ref{sec:tree_N_uniform}.  
Therefore, the event that $c_{e_2}(t)-c_{e_1}(t)\geq M$ for some $t\in[0..Nn]$ implies the event that $\mathrm{max}(|\ol{o_{e_1}}(t)|,|\ol{o_{e_2}}(t)|)\geq M/4$ for some $t\in[0..Nn]$.    
With the terminology of Section~\ref{sec:tree_N_uniform}, for every $e\in E$, the event $\mathrm{max}_{t\in [0,Nn]}|\ol{o_e}(t)|\geq M$ is $(N,M)$-concentrated. Hence, the event
$\{\mathrm{max}_{t\in [0,Nn]}(c_{e_2}(t)-c_{e_1}(t))\geq M\}$ is $(N,M)$-concentrated. This implies that $\mathbb{P}(c_{e_2}(Nn)-c_{e_1}(Nn)\geq N^{2/3})=O(e^{-bN^{1/3}})$ for some $b>0$, and (by the union bound) that $\mathbb{P}(q(Nn)-p(Nn)\geq N^{2/3})=O(e^{-\tilde{b}N^{1/3}})$ for some $\tilde{b}>0$. Note that $\frac{Nn}{n-1}\in[p(Nn),q(Nn)]$, and 
for every edge $e\in E$ we have $c_e(Nn)\in  [p(Nn),q(Nn)]$. Therefore, $\mathbb{P}(|c_e(Nn)-\frac{Nn}{n-1}|\geq N^{2/3})=O(e^{-\tilde{b}N^{1/3}})$. Hence, in the $N$-uniform 
model we have $\ReN-\frac1{n-1}=O(N^{-1/3})=o(1)$ w.h.p. as $N\to\infty$.   
In the $N$-balanced model, by the same argument as in Lemma~\ref{sec:tree_N_balanced},  we have 
$\mathbb{P}(|\ceN-\frac{Nn}{n-1}|\geq N^{2/3})=O(N^{n/2}e^{-\tilde{b}N^{1/3}})$; hence, $\ReN-\frac1{n-1}=O(N^{-1/3})=o(1)$ w.h.p. as $N\to\infty$.   
\end{proof}

Note that with regular Count-Min, the relative error of every edge is $1+o(1)$ (exactly $1$ in the $N$-balanced model), 
    since all vertices have degree 2.   
The statement and proof of Theorem~\ref{theo:dual_complete} easily extend to dual complete hypergraphs as follows. Let $K_{n,r}$ be the complete $r$-uniform hypergraph on $n$ vertices (for $n\geq r\geq 2$), and consider a CU-process on $K_{n,r}'$ in the $N$-uniform (resp. $N$-balanced) model. 
Then { the counter of each vertex $v\in K_{n,r}'$ satisfies $\cvN/N=n/(n-r+1)+o(1)$ w.h.p as $N\to\infty$}, hence the relative error $\ReN$ of any edge $e\in K_{n,r}'$ satisfies $\ReN=\frac{r-1}{n-r+1}+o(1)$ w.h.p. as $N\to\infty$.  
For regular Count-Min, the relative error is $r-1+o(1)$ (exactly $r-1$ in the $N$-balanced model), since every vertex in $K_{n,r}'$ has degree $r$. 













\section{Non-uniform distributions}
\label{sec:zipf}

An interesting  and natural question is whether the phase transition phenomenon holds for non-uniform distributions as well. This question is of practical importance, as in many practical situations keys are not distributed uniformly. In particular, Zipfian distributions often occur in various applications and are a common test case for Count-Min sketches \cite{CormodeMuthuICDM05,DBLP:conf/teletraffic/BianchiDLS12,DBLP:conf/iccnc/EinzigerF15,10.1145/3487552.3487856,DBLP:journals/corr/abs-2203-14549}. 

In Zipf's distributions, key probabilities in descending order are proportional to $1/i^\beta$, where $i$ is the rank of the key and $\beta\geq 0$ is the \textit{skewness} parameter. Note that for $\beta = 0$, Zipf's distribution reduces to the uniform one. It is therefore a natural question whether the phase transition occurs for Zipf's distributions with $\beta > 0$. 

One may hypothesize that the answer to the question should be positive, as under Zipf's distribution, frequent keys tend to have no error, as it has been observed in earlier papers \cite{DBLP:conf/teletraffic/BianchiDLS12,DBLP:journals/corr/abs-2203-14549,benmazziane:hal-03613957}. On the other hand, keys of the tail of the distribution have fairly similar frequencies, and therefore might show the same behavior as for the uniform case. 

However, this hypothesis does not hold. Figure~\ref{fig:zipf} shows the behavior of the average error for Zipf's distributions with $\beta \in \{0.2,0.5,0.7,0.9\}$ vs. the uniform distribution ($\beta=0$). The average error is defined here as the average error of all keys weighted by their frequencies\footnote{This definition is natural for non-uniform distributions, as the error for a frequent key should have a larger contribution. Note that it is consistent with the definition of Section~\ref{sec:main_results} in the $N$-balanced case, and in the $N$-uniform case it presents a negligible difference when $N$ gets large.}, i.e. $\mathrm{err}_N(H)=\frac{1}{mN}\sum_{e\in E}\oeN\frac{\ceN-\oeN}{\oeN}=\frac{1}{mN}\sum_{e\in E}(\ceN-\oeN)$. In other words, $\mathrm{err}_N(H)$ is the expected error of a randomly drawn key from the entire input stream of length $mN$ (taking into account multiplicities). 

\begin{figure}
	\begin{center}
		\includegraphics[width=0.7\textwidth]{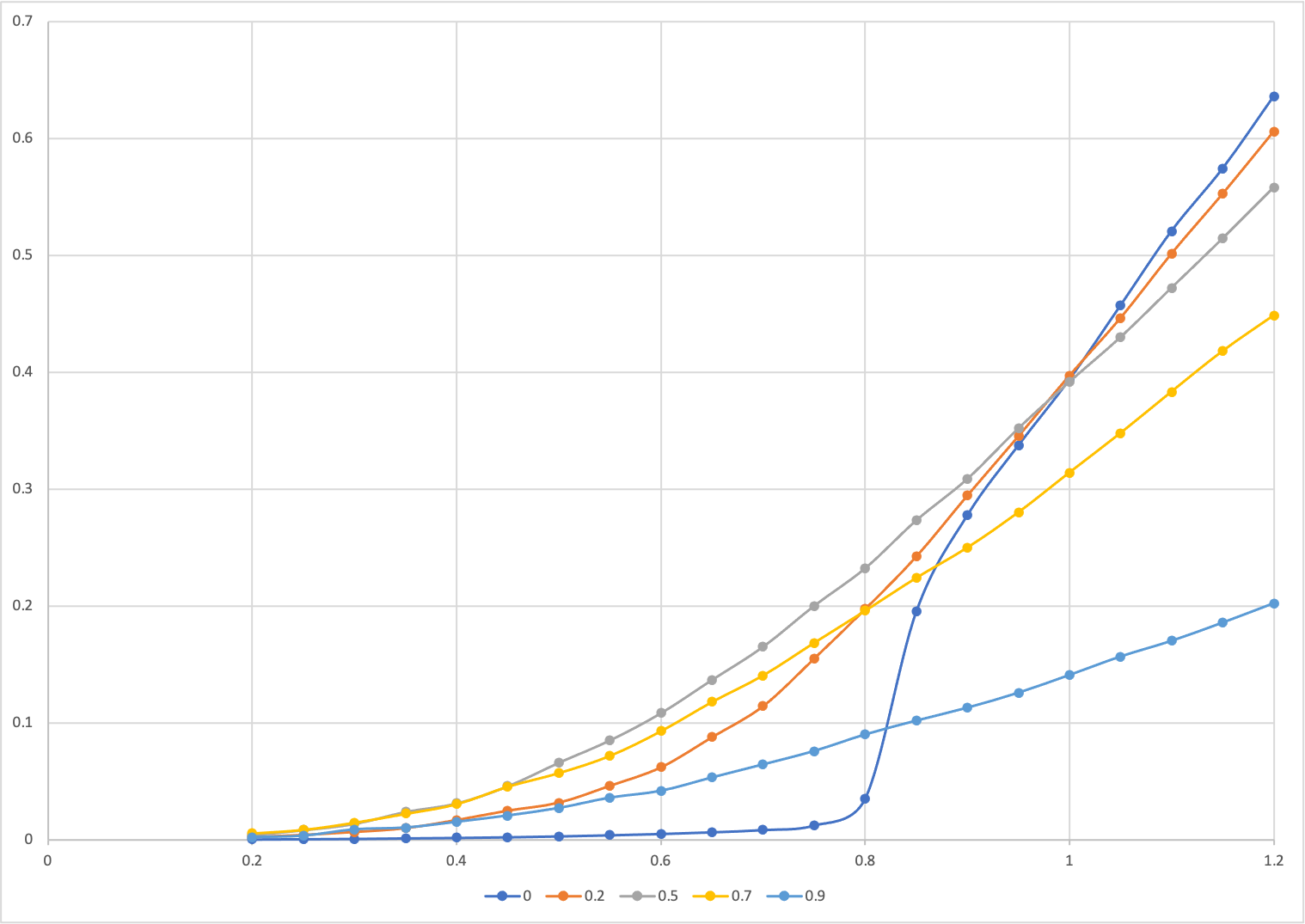}
		\end{center}
	\caption{Average error as a function of $\lambda=m/n$, for Zipf's distributions with $\beta \in \{0.0,0.2,0.5,0.7,0.9\}$. Plots obtained for $n=1000$, $k=3$, $N=50,000$.}
	\label{fig:zipf}
\end{figure}
We observe that the phase transition behavior vanishes for $\beta>0$. It turns out that even in the subcritical regime, frequent elements, while having no error themselves, heavily affect the error of certain rare elements, which raises the resulting average error. In the supercritical regime ($\lambda>1$ in Figure~\ref{fig:zipf}) the opposite happens: the uniform distribution shows the largest average error. This is because an increasingly large fraction of the keys (those in the core of the associated hypergraph) contribute to the error, while for skewed distributions, frequent keys tend to have no error, and thus the larger $\beta$ (with frequent keys becoming more predominant) the smaller the average error.  {Note that this is in accordance with the observation of \cite{DBLP:conf/teletraffic/BianchiDLS12} that the estimates for the uniform distribution majorate the estimates of infrequent keys for skewed distributions.}




\section{Concluding remarks}

We presented an analysis of conservative update strategy for Count-Min sketch under the assumption of uniform distribution of keys in the input stream. Our results show that the behaviour of the sketch heavily depends on the properties of the underlying hash hypergraph. Assuming that hash functions are $k$-wise independent, the error produced by the sketch follows two different regimes depending on the density of the underlying hypergraph, that is the number of distinct keys relative to the size of the sketch. When this ratio is below the threshold, the conservative update strategy produces a $o(1)$ relative error when the input stream and the number of distinct keys both grow, while the regular Count-Min produces a positive constant error. This gap formally demonstrates that conservative update achieves a substantial improvement over regular Count-Min. 

We showed that the above-mentioned threshold corresponds to the peelability threshold for $k$-uniform random hypergraphs. One practical implication of this is that the best memory usage is obtained with three hash functions, due to the fact that $\lambda_3$ is maximum among all $\lambda_k$, and therefore $k=3$ leads to the minimum number of counters needed to deal with a given number of distinct keys. 

In \cite{DBLP:conf/sigmod/CohenM03} it is claimed, without proof, that the rate of positive errors of conservative update is $k$ times smaller than that of regular Count-Min. This claim does not appear to be true. Note that Count-Min does not err on a key represented in the sketch if and only if the corresponding edge of the hypergraph includes a leaf (vertex of degree 1), while the conservative update can return an exact answer even for an edge without leaves. However, this latter event depends on the relative frequencies of keys and therefore on the specific distribution of keys and the input length. On the other hand, our experiments with uniformly distributed keys show that this event is relatively rare, and the rate of positive error for Count-Min and conservative update are essentially the same. 


One important assumption of our analysis is the uniform distribution of keys in the input. 
We presented an experimental evidence that  for skewed distributions, in particular  for Zipf's distribution, the phase transition 
{disappears when the skewness parameter grows. Therefore, the uniform distribution presents the smallest error in the subcritical regime. The situation is the opposite in the supercritical regime when the number of distinct keys is large compared to the number of counters: here the uniform distribution presents the largest average error.
As mentioned earlier, } 
for Zipf's distribution, frequent keys have essentially no error, whereas in the supercritical regime, low frequency keys have all similar overestimates. This reveals another type of phase transition in error approximation for Zipf's distribution, occurring between frequent and infrequent elements. Quantifying this transition is an interesting open question directly related to the accurate detection of \textit{heavy hitters} in streams.

\paragraph*{Acknowledgments} We thank Djamal Bellazougui who first pointed out to us the conservative update strategy. 

\bibliography{biblio.bib}

\end{document}